\newtheorem{thm}{\bfseries Theorem}[section]
\newtheorem{lem}[thm]{\bfseries Lemma}
\newtheorem{prop}[thm]{\bfseries Proposition}
\theoremstyle{definition}
\newtheorem{defi}[thm]{\bfseries Definition}
\newtheorem{ex}[thm]{\bfseries Example}
\newtheorem{rem}[thm]{\bfseries Remark}
\newtheorem{prob}[thm]{\bfseries Problem}
\newcommand{\bsl}{\backslash}
\newcommand{\Ra}{\Rightarrow}
\newcommand{\real}{\mathbb R}
\newcommand{\set}[2]{\{ #1 \, | \, #2 \}}
\newcommand{\tcei}[1]{\lceil #1 \rceil}
\newcommand{\1}{\bm 1}
\newcommand{\gap}{\mathrm{gap}}
\def\Hline{%
\noalign{\ifnum0=`}\fi\hrule \@height 1.2pt \futurelet
\reserved@a\@xhline}
\def\thanks#1{%
   \footnotemark
   \edef\@tempa{\noexpand\noexpand\noexpand\footnotetext[\the\c@footnote]}%
   \toks@\expandafter{\@thanks}%
   \toks\tw@{{#1}}
   \xdef\@thanks{\the\toks@\@tempa\the\toks\tw@}}
\begin{document}
\title{A polynomial time algorithm to compute geodesics \\ in CAT(0) cubical complexes}
\author{Koyo HAYASHI \\
Department of Mathematical Informatics, \\
Graduate School of Information Science and Technology, \\
University of Tokyo, Tokyo 113-8656, Japan. \\
Email: \texttt{koyo\_hayashi@mist.i.u-tokyo.ac.jp}
}
\date{}
\maketitle

\begin{quote}
\textbf{Abstract.}
This paper presents the first polynomial time algorithm to compute geodesics in a CAT(0) cubical complex in general dimension.
The algorithm is a simple iterative method to update breakpoints of a path joining two points using Miller, Owen and Provan's algorithm (2015) as a subroutine.
Our algorithm is applicable to any CAT(0) space in which geodesics between two close points can be computed, not limited to CAT(0) cubical complexes.
\end{quote}

\section{Introduction}
Computing a shortest path in a polyhedral domain in Euclidean space is a fundamental and important algorithmic problem, which is intensively studied in computational geometry~\cite{Mit}.
This problem is relatively easy to solve in the two-dimensional case; it can generally be reduced to a discrete graph searching problem where some combinatorial approaches can be applied.
In three or more dimensions, however, the problem becomes much harder; it is not even discrete.
In fact, it was proved by Canny and Reif~\cite{Can} that the shortest path problem in a polyhedral domain is NP-hard.
Mitchell and Sharir~\cite{MitSha} have shown that the problem of finding a shortest obstacle-avoiding path is NP-hard even for the case of a region with obstacles that are disjoint axis-aligned boxes.
On the other hand, there are some cases where one can obtain polynomial time complexity.
For instance, it was shown by Sharir~\cite{Sha} that a shortest obstacle-avoiding path among $k$ disjoint convex polyhedra having altogether $n$ vertices, can be found in $n^{O(k)}$ time,
which implies that this problem is polynomially solvable if $k$ is a small constant.

What determines the tractability of the shortest path problem in geometric domains?
One of promising answers to this challenging question is \emph{global non-positive curvature}, or \emph{CAT(0) property}~\cite{Gro}.
CAT(0) spaces are metric spaces in which geodesic triangles are ``not thicker'' than those in the Euclidean plane, and enjoy various fascinating properties generalizing those in Euclidean and hyperbolic spaces.
As Ghrist and LaValle~\cite{GhrLaV} observed, no NP-hard example in~\cite{MitSha} is a CAT(0) space.
One of the significant properties of CAT(0) spaces is the uniqueness of geodesics: Every pair of points can be joined by a unique geodesic.
Computational and algorithmic theory on CAT(0) spaces is itself a challenging research field~\cite{Bac}.

One of fundamental and familiar CAT(0) spaces is a \emph{CAT(0) cubical complex}.
A cubical complex is a polyhedral complex where each cell is isometric to a unit cube of some dimension and the intersection of any two cells is empty or a single face.
Gromov~\cite{Gro} gave a purely combinatorial characterization of cubical complexes of non-positive curvature as cubical complexes in which the link of each vertex is a flag simplicial complex.
Chepoi~\cite{Che} and Roller~\cite{Rol} established that the $1$-skeletons of CAT(0) cubical complexes are exactly \emph{median graphs}, i.e., graphs in which any three vertices admit a unique median vertex.
It is also shown by Barth\'{e}lemy and Constantin~\cite{Bar} that median graphs are exactly the domains of event structures~\cite{Nie}.
These nice combinatorial characterizations are one of the main reasons why CAT(0) cubical complexes frequently appear in mathematics, for instance, in geometric group theory~\cite{Rol, Sag},
metric graph theory~\cite{BanChe}, concurrency theory in computer science~\cite{Nie}, theory of reconfigurable systems~\cite{Abr, Ghr}, and phylogenetics~\cite{Bil}.

There has been several polynomial time algorithms to find shortest paths in some CAT(0) cubical complexes.
A noteworthy example is for a \emph{tree space}, introduced by Billera, Holmes and Vogtmann~\cite{Bil} as a continuous space of phylogenetic trees.
This space is shown to be CAT(0), and consequently provides a powerful tool for comparing two phylogenetic trees through the unique geodesic.
Owen and Provan~\cite{Owe, OwePro} gave a polynomial time algorithm for finding geodesics in tree spaces, which was generalized by Miller et al.~\cite{Mil} to \emph{CAT(0) orthant spaces}, i.e., complexes of Euclidean orthants that are CAT(0).
Chepoi and Maftuleac~\cite{CheMaf} gave an efficient polynomial time algorithm to compute geodesics in a two dimensional CAT(0) cubical complex.
These meaningful polynomiality results naturally lead to a question: What about arbitrary CAT(0) cubical complexes?

Ardila, Owen and Sullivant~\cite{Ard} gave a combinatorial description of CAT(0) cubical complexes, employing a poset endowed with an additional relation, called a \emph{poset with inconsistent pairs (PIP)}.
This can be viewed as a generalization of Birkhoff's theorem that gives a compact representation of distributive lattices by posets.
In fact, they showed that there is a bijection between CAT(0) cubical complexes and PIPs.
(Through the above-mentioned equivalence, this can be viewed as a rediscovery of the result of Barth\'{e}lemy and Constantin~\cite{Bar}, who found a bijection between PIPs and pointed median graphs.)
This relationship enables us to express an input CAT(0) cubical complex as a PIP:
For a poset with inconsistent pairs $P$, the corresponding CAT(0) cubical complex $\mathcal K_P$ is realized as a subcomplex of the $|P|$-dimensional cube $[0, 1]^{P}$ in which
the cells of $\mathcal K_P$ are specified by structures of $P$.
Adopting this embedding as an input, they gave the first algorithm to compute geodesics in an arbitrary CAT(0) cubical complex.
Their algorithm is based on an iterative method to update a sequence of cubes that may contain the geodesic,
where at each iteration it solves a touring problem using second order cone programming~\cite{Pol}.
They also showed that the touring problem for general CAT(0) cubical complexes has intrinsic algebraic complexity,
and geodesics can have breakpoints whose coordinates have nonsolvable Galois group.
This implies that there is no exact simple formula for the geodesic and therefore in general, one can only obtain an approximate one.
Unfortunately, even if the touring problem could be solved exactly, it is not known whether or not their algorithm is a polynomial one;
that is, no polynomial time algorithm has been known for the shortest path problem in a CAT(0) cubical complex in general dimension.

\paragraph{Main result.}
In this paper, we present the first polynomial time algorithm to compute geodesics in a CAT(0) cubical complex in general dimension,
answering the open question suggested by these previous work; namely we show that:
\begin{quote}
Given a CAT(0) cubical complex $\mathcal K$ represented by a poset with inconsistent pairs $P$ and two points $p, q$ in $\mathcal K$,
one can find a path joining $p$ and $q$ of length at most $d(p, q) + \epsilon$ in time polynomial in $|P|$ and $\log(1/\epsilon)$.
\end{quote}
The algorithm is quite simple, without depending on any involved techniques such as semidefinite programming.
To put it briefly, our algorithm first gives a polygonal path joining $p$ and $q$ with a fixed number ($n$, say) of breakpoints, and then iteratively updates the breakpoints of the path
until it becomes a desired one.
To update them, we compute the midpoints of the two close breakpoints by using Miller, Owen and Provan's algorithm. The resulting number of iterations is bounded by a polynomial in $n$.
Key tools that lead to this bound are linear algebraic techniques and the convexity of the metric of CAT(0) spaces, rather than inherent properties of cubical complexes.
Due to its simplicity, our algorithm is applicable to any CAT(0) space where geodesics between two close points can be found, not limited to CAT(0) cubical complexes.
We believe that our result will be an important step toward developing computational geometry in CAT(0) spaces.

\paragraph{Application.}
A \emph{reconfigurable system}~\cite{Abr, Ghr} is a collection of states which change according to local and reversible moves that affect global positions of the system.
Examples include robot motion planning, non-collision particles moving around a graph, and protein folding; see~\cite{Ghr}.
Abrams, Ghrist and Peterson~\cite{Abr, Ghr} considered a continuous space of all possible positions of a reconfigurable system, called a \emph{state complex}. 
Any state complex is a cubical complex of non-positively curved~\cite{Ghr}, and it becomes CAT(0) in many situations.
In the robotics literature, geodesics (in the $l_2$-metric) in the CAT(0) state complex corresponds to the motion planning to get the robot from one position to another one with minimal power consumption.
Our algorithm enables us to find such an optimal movement of the robot in polynomial time.

\paragraph{Organization.}
The rest of this paper is organized as follows.
In Section \ref{sec:CAT0geodesics}, we devise an algorithm to compute geodesics in general CAT(0) spaces.
In Section \ref{sec:geoCube}, we present a polynomial time algorithm to compute geodesics in CAT(0) cubical complexes, using the result of Section \ref{sec:CAT0geodesics}.
Section \ref{subsec:CAT0spaces} and Section \ref{subsec:cube} to \ref{subsec:orthant} are preliminary sections, where CAT(0) metric spaces, CAT(0) cubical complexes, median graphs, PIPs and CAT(0) orthant spaces are introduced.


\section{Computing geodesics in CAT(0) spaces}\label{sec:CAT0geodesics}
In this section we devise an algorithm to compute geodesics in general CAT(0) spaces, not limited to CAT(0) cubical complexes.
\begin{figure}[t]
	\centering
	\includegraphics[bb= 0 0 312 102]{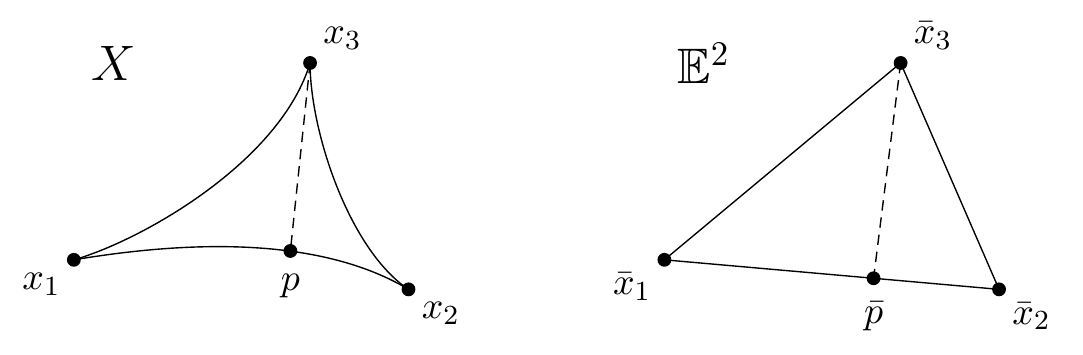}\caption{CAT(0) space.}
	\label{fig:thin}
\end{figure}

\subsection{CAT(0) space}\label{subsec:CAT0spaces}
Let $(X, d)$ be a metric space.
A \textit{geodesic} joining two points $x , y \in X$ is a map $\gamma :[0,1] \to X$
such that $\gamma(0) = x, \gamma(1) = y$ and $d(\gamma(s), \gamma(t)) = d(x,y)|s - t|$ for all $s, t \in [0, 1]$.
The image of $\gamma$ is called a \textit{geodesic segment} joining $x$ and $y$.
A metric space $X$ is called \textit{(uniquely) geodesic} if every pair of points $x, y \in X$ is joined by a (unique) geodesic.

For any triple of points $x_1, x_2, x_3$ in a metric space $(X, d)$, there exists a triple of points $\bar x_1, \bar x_2, \bar x_3$ in the Euclidean plane $\mathbb E^2$ such that
$d(x_i, x_j) = d_{\mathbb E^2}(\bar x_i, \bar x_j)$ for $i, j \in \{1, 2, 3\}$.
The Euclidean triangle whose vertices are $\bar x_1, \bar x_2$ and $\bar x_3$ is called a \textit{comparison triangle} for $x_1, x_2, x_3$.
(Note that such a triangle is unique up to isometry.)
A geodesic metric space $(X, d)$ is called a \textit{CAT(0) space} if for any $x_1, x_2, x_3 \in X$ and any $p$ belonging to a geodesic segment joining $x_1$ and $x_2$,
the inequality $d(x_3, p) \le d_{\mathbb E^2}(\bar x_3, \bar p)$ holds, where $\bar p$ is the unique point in $\mathbb E^2$ satisfying $d(\bar x_i, \bar p) = d_{\mathbb E^2}(x_i, p)$ for $i = 1, 2$. See Figure~\ref{fig:thin}.

This simple definition yields various significant properties of CAT(0) spaces; see~\cite{Bri} for details.
One of the most basic properties of  CAT(0) spaces is the convexity of the metric.
A geodesic metric space $(X, d)$ is said to be \textit{Busemann convex} if for any two geodesics $\alpha, \beta : [0, 1] \to X$,
the function $f : [0, 1] \to \real$ given by $f(t) := d(\alpha(t), \beta(t))$ is convex.
\begin{lem}[{\cite[Proposition II.2.2]{Bri}}]\label{lem:Busemann}
Every CAT(0) space is Busemann convex.
\end{lem}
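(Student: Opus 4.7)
The plan is to reduce to midpoint convexity and then chain two applications of the Bruhat--Tits (CN) inequality with an auxiliary ``diagonal'' midpoint. Since every geodesic $\gamma:[0,1]\to X$ is Lipschitz, the function $f(t):=d(\alpha(t),\beta(t))$ is continuous on $[0,1]$, and continuity plus midpoint convexity implies convexity. By restricting $\alpha,\beta$ to an arbitrary subinterval and reparametrizing affinely (both operations preserve the geodesic property), it is enough to prove
$$f(1/2) \le \tfrac{1}{2}\bigl(f(0)+f(1)\bigr) \quad \text{for every pair of geodesics } \alpha,\beta:[0,1]\to X.$$

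As a preparatory step I would derive the CN (Bruhat--Tits) inequality from the CAT(0) hypothesis: for any midpoint $m$ of points $x,y\in X$ and any $z\in X$,
$$d(z,m)^2 \le \tfrac{1}{2}d(z,x)^2 + \tfrac{1}{2}d(z,y)^2 - \tfrac{1}{4}d(x,y)^2.$$
This is immediate from the CAT(0) inequality applied to the triangle $x,y,z$, combined with the exact parallelogram identity in the Euclidean comparison plane (which expresses the squared distance from a vertex to the midpoint of the opposite side in precisely those terms).

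For the main step, write $a_i:=\alpha(i)$ and $b_i:=\beta(i)$ for $i\in\{0,1\}$, and let $m$ be the midpoint of a geodesic from $a_0$ to $b_1$. I claim
$$d(\alpha(1/2),m) \le \tfrac{1}{2}d(a_1,b_1), \qquad d(\beta(1/2),m) \le \tfrac{1}{2}d(a_0,b_0),$$
after which the triangle inequality yields $f(1/2)\le\tfrac{1}{2}(f(0)+f(1))$. To get the first estimate, I would apply CN once with midpoint $m$ of $[a_0,b_1]$ and $z=\alpha(1/2)$, which expresses $d(\alpha(1/2),m)^2$ in terms of the unknown cross-distance $d(\alpha(1/2),b_1)^2$, and then apply CN again with midpoint $\alpha(1/2)$ of $[a_0,a_1]$ and $z=b_1$ to bound this cross-distance. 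A direct substitution causes the $d(a_0,a_1)^2$ and $d(a_0,b_1)^2$ contributions to cancel, leaving exactly $\tfrac{1}{4}d(a_1,b_1)^2$. The second estimate is symmetric, with the roles of $\alpha$ and $\beta$ interchanged.

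The main obstacle is conceptual rather than computational: the CAT(0) hypothesis as stated only compares a vertex of a triangle to a point on the opposite side, whereas Busemann convexity concerns distances between points on two different sides of the ``quadrilateral'' $(a_0,a_1,b_1,b_0)$. Overcoming this requires choosing the auxiliary midpoint $m$ of the \emph{diagonal} $[a_0,b_1]$ and carefully chaining the two CN estimates in the right order, so that the mixed term $d(\alpha(1/2),b_1)$ produced by the first application is exactly what the second application controls; any looser pairing leaves uncancelled cross-terms behind.
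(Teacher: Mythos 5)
Your proof is correct. The paper itself does not prove this lemma but cites Bridson--Haefliger, Proposition II.2.2, whose proof proceeds differently: it first upgrades the vertex-to-side comparison in the CAT(0) definition to a side-to-side comparison for pairs of points on two sides of a geodesic triangle (BH Proposition II.1.7), and then splits the quadrilateral $(\alpha(0),\alpha(1),\beta(1),\beta(0))$ along the geodesic from $\alpha(0)$ to $\beta(1)$, applying the side-to-side comparison to each of the two triangles. Your route instead derives the CN (Bruhat--Tits) inequality directly from the vertex-to-point comparison via the Euclidean median formula, and then chains two CN estimates through the midpoint $m$ of the diagonal $[\alpha(0),\beta(1)]$; the algebraic cancellation you describe checks out, yielding $d(\alpha(1/2),m)\le\tfrac12 f(1)$ and $d(\beta(1/2),m)\le\tfrac12 f(0)$, and the triangle inequality then gives midpoint convexity, which together with the Lipschitz continuity of $f$ gives convexity. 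The two proofs split along the same diagonal, but yours avoids having to establish the stronger side-to-side comparison and works purely at the level of the metric-space CN inequality, which makes it more self-contained relative to the definition of CAT(0) used in the paper; the BH route has the advantage that Proposition II.1.7 is independently useful elsewhere. One small remark: the ``symmetry'' between your two estimates is a reflection swapping $\alpha\leftrightarrow\beta$ \emph{and} reversing orientation (so that the diagonal $[\alpha(0),\beta(1)]$ is preserved), not a plain swap of $\alpha$ and $\beta$; stated as a plain swap it would change the diagonal, though the computation you outline is the correct one.
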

A Busemann convex space $X$ is uniquely geodesic.
Indeed, for any two geodesics $\alpha, \beta : [0, 1] \to X$ with $\alpha(0) = \beta(0)$ and $\alpha(1) = \beta(1)$, one can easily see that $\alpha$ and $\beta$ coincide, since
$d(\alpha(t), \beta(t)) \le (1-t) d(\alpha(0), \beta(0)) + t d(\alpha(1), \beta(1)) = 0$ for all $t \in [0, 1]$.
This implies that:
\begin{thm}[{\cite[Proposition II.1.4]{Bri}}]
Every CAT(0) space is uniquely geodesic.
\end{thm}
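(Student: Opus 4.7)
The plan is to derive uniqueness of geodesics as an immediate consequence of Busemann convexity (Lemma~\ref{lem:Busemann}), since existence of at least one geodesic is built into the definition of a CAT(0) space (CAT(0) spaces are required to be geodesic metric spaces). So the only substantive content is uniqueness, and the informal argument sketched in the paragraph preceding the theorem can be promoted to a proof essentially verbatim.

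First I would fix an arbitrary pair of points $x, y \in X$ and suppose, for contradiction, that two geodesics $\alpha, \beta : [0, 1] \to X$ both join $x$ to $y$, so $\alpha(0) = \beta(0) = x$ and $\alpha(1) = \beta(1) = y$. By Lemma~\ref{lem:Busemann} the space $X$ is Busemann convex, so the function $f(t) := d(\alpha(t), \beta(t))$ is convex on $[0, 1]$. Then for every $t \in [0, 1]$,
\begin{equation*}
0 \le f(t) \le (1 - t) f(0) + t f(1) = (1 - t) d(x, x) + t d(y, y) = 0,
\end{equation*}
so $\alpha(t) = \beta(t)$ for all $t$, giving uniqueness.

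The only non-routine ingredient is Lemma~\ref{lem:Busemann}, the convexity of the metric in CAT(0) spaces; this is the step where the CAT(0) comparison-triangle condition actually enters, and it is cited to \cite{Bri}. Given that lemma, the proof above is a one-line convexity argument, so I would not expect any obstacle and would simply present it as a two-sentence corollary-style proof.
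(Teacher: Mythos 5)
Your proof is correct and matches the paper's own argument exactly: the paragraph immediately preceding the theorem derives uniqueness from Busemann convexity (Lemma~\ref{lem:Busemann}) via the same one-line convexity inequality $d(\alpha(t),\beta(t)) \le (1-t)d(\alpha(0),\beta(0)) + t\,d(\alpha(1),\beta(1)) = 0$. You have simply written out formally what the paper states informally, so there is nothing to add.
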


\subsection{Algorithm}\label{subsec:CAT0algo}
\begin{figure}[t]
	\centering
	\includegraphics[bb=0 0 363 184]{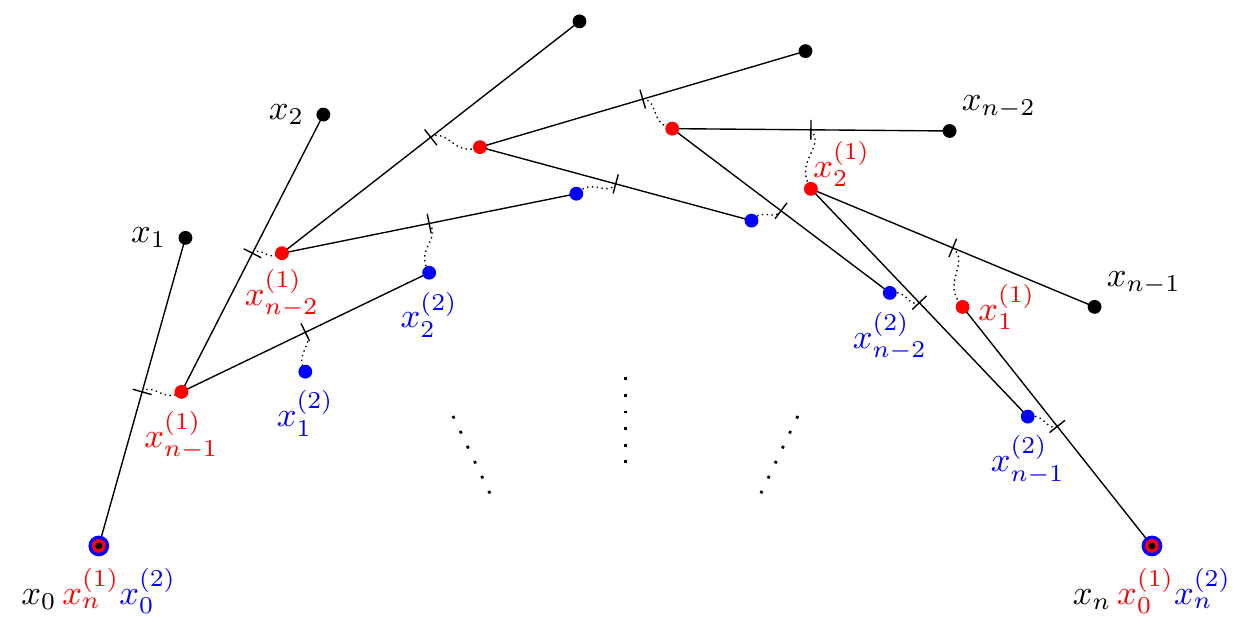}
	\caption{An illustration of Algorithm 1.}
	\label{fig:FigAlgo}
\end{figure}

Let $X$ be a CAT(0) space.
We shall refer to an element $x$ in the product space $X^{n+1}$ as a \textit{chain}, and write $x_{i-1}$ to denote the $i$-th component of $x$, i.e., $x = (x_0, x_1, \dots, x_n)$.
For any chain $x \in X^{n+1}$, we define the \textit{length} of $x$ by $\sum_{i=0}^{n-1}d(x_i, x_{i+1})$ and denote it by $\ell (x)$.
We consider the following problem:
\begin{equation}\label{eq:prob1}
\text{\parbox{.85\textwidth}{
Given two points $p, q \in X$, a chain $x \in X^{n+1}$ with $x_0 = p$ and $x_n = q$, and a positive parameter $\epsilon > 0$, find a chain $y \in X^{n+1}$ such that $y_0 = p$, $y_n=q$ and $\ell (y) \le d(p, q) + \epsilon$,
}}
\end{equation}
under the situation where we are given an oracle to perform the following operation for some $D > 0$:
\begin{equation}\label{eq:oracle}
\text{\parbox{.85\textwidth}{
Given two points $p, q \in X$ with $d(p, q) \le D$, compute the geodesic joining $p$ and $q$ in arbitrary precision.
}}
\end{equation}

To explain our algorithm to solve this problem, we need some definitions.
Since $X$ is uniquely geodesic, every pair of points $p, q \in X$ has a unique midpoint $w$ satisfying $2d(w, p) = 2d(q, w) = d(p, q)$.
For a nonnegative real number $\delta \ge 0$, a $\delta$-\textit{midpoint} of $p$ and $q$ is a point $w' \in X$ satisfying $d(w', w) \le \delta$, where $w$ is the midpoint of $p$ and $q$.
\begin{defi}[$\delta$-halved chain]
Let $\delta$ be a nonnegative real number.
For any chain $x \in X^{n+1}$, a chain $z \in X^{n+1}$ is called a $\delta$-\textit{halved chain} of $x$ if it satisfies the following:
\begin{quote}
$z_0 = x_n$, $z_n = x_0$ and $z_i$ is a $\delta$-midpoint of $z_{i+1}$ and $x_{n-i}$ for $i = 1,2, \dots, n-1$.
\end{quote}
For an integer $k \ge 0$, we say that $x^{(k)}$ is a \textit{$k$-th $\delta$-halved chain} of $x$ if there exists a sequence $\{x^{(j)}\}_{j=0}^{k}$ of chains in $X^{n+1}$ such that $x^{(0)} = x$ and $x^{(j)}$ is a $\delta$-halved chain of $x^{(j-1)}$ for $j = 1, 2, \dots, k$.
\end{defi}

Our algorithm can be described as follows. To put it briefly, the algorithm just finds a $k$-th $\delta$-halved chain of a given chain $x$ for some large $k$ and small $\delta$;
see Figure~\ref{fig:FigAlgo} for an illustration.
In the algorithm the local optimization is done alternatively ``from left to right'' and ``from right to left'' so that the analysis will be easier.

\begin{algorithm}
\caption{}
\label{algo:1}
\textbf{Input.} Two points $p, q \in X$, a chain $x \in X^{n+1}$ with $x_0 = p$ and $x_n = q$, and parameters $\epsilon > 0, \delta \ge 0$.
\begin{enumerate}
\renewcommand{\labelenumi}{\theenumi}
\renewcommand{\theenumi}{$\langle$\arabic{enumi}$\rangle$}
\renewcommand{\labelenumii}{\theenumii}
\renewcommand{\theenumii}{$\langle$2-\arabic{enumii}$\rangle$}
\setlength{\parskip}{0cm}
\setlength{\itemsep}{0cm}
	\item Set $x^{(0)} := x$. 
	\item\label{item:algo2} For $j = 0, 1, 2, \dots$, do the following:
	\begin{enumerate}
\setlength{\parskip}{0cm}
\setlength{\itemsep}{0cm}
		\item Set $z_0 := x_n^{(j)}$ and $z_n := x_0^{(j)}$.
		\item For $i = 1, 2, \dots, n-1$, do the following:
	\begin{equation}\label{eq:algomid}
		\text{Compute a $\delta$-midpoint $w$ of $z_{n-i+1}$ and $x_i^{(j)}$, and set $z_{n-i} := w$}.
	\end{equation}
		\item Set $x^{(j+1)} := (z_0, z_1, \dots, z_n)$.
	\end{enumerate}
\end{enumerate}
\end{algorithm}

For any chain $x \in X^{n+1}$, define the \textit{gap} of $x$ by $\max \{ d(x_0, x_1), \max_{1 \le i \le n-1}2 d(x_i, x_{i+1}) \}$ and denote it by $\gap (x)$.
The following theorem states that Algorithm~1 solves problem~(\ref{eq:prob1}).
\begin{thm}\label{thm:main1}
Let $p, q \in X$ be given two points, $x \in X^{n+1}$ be a given chain with $x_0 = p$ and $x_n = q$,
and $\epsilon > 0, 0\le \delta \le \epsilon/(16n^3)$ be parameters.
\begin{enumerate}
\renewcommand{\labelenumi}{\theenumi}
\renewcommand{\theenumi}{\textup{(\roman{enumi})}}
\setlength{\parskip}{0cm}
\setlength{\itemsep}{0cm}
\item\label{item:1} For $j \ge n^2 \log(4n \cdot \ell (x)/\epsilon)$, one has $\ell (x^{(j)}) \le d(p, q) + \epsilon$.
\item\label{item:2} If $\gap(x) \le D/2 - \epsilon$ for some $D > 0$, then for all $j \ge 0$ and for $i = 1, 2, \dots, n-1$, one has $d(z_{n-i+1}, x_i^{(j)}) \le D$ in (\ref{eq:algomid}).
\end{enumerate}
In particular, for $\gap(x) \le D/2 - \epsilon$, one can find a chain $y \in X^{n+1}$ such that $y_0 = p$, $y_n=q$ and $\ell (y) \le d(p, q) + \epsilon$, with $O(n^3 \log (nD/\epsilon))$ calls of an oracle to perform (\ref{eq:oracle}).
\end{thm}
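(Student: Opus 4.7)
The plan is to prove (ii) first, since it ensures the midpoint oracle is always called on valid inputs, and then derive (i) via a length-contraction argument.

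For (ii), I argue by outer induction on $j$ with a nested induction on $i$ inside each iteration. The key observation is that $z_{n-i+1}$ was just computed as a $\delta$-midpoint of $z_{n-i+2}$ and $x_{i-1}^{(j)}$, so $d(z_{n-i+1}, x_{i-1}^{(j)}) \le d(z_{n-i+2}, x_{i-1}^{(j)})/2 + \delta$. Combining with the triangle inequality and $d(x_{i-1}^{(j)}, x_i^{(j)}) \le \gap(x^{(j)})/2$, the inner induction gives $d(z_{n-i+1}, x_i^{(j)}) \le \gap(x^{(j)}) + O(\delta)$, and an analogous estimate for $d(z_0, z_1)$ yields $\gap(x^{(j+1)}) \le \gap(x^{(j)}) + O(\delta)$. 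Under $\delta \le \epsilon/(16 n^3)$, the cumulative drift across all iterations stays negligible, so $\gap(x^{(j)}) \le D/2$ throughout and every midpoint call stays within the oracle's range.

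For (i), I first prove the monotonicity bound $\ell(x^{(j+1)}) \le \ell(x^{(j)}) + O(n\delta)$ by writing $d(z_{n-i}, z_{n-i+1}) \le d(z_{n-i+1}, x_i^{(j)})/2 + \delta$ and unrolling the triangle inequality from (ii); this produces a telescoping double sum in which each edge length $d(x_k^{(j)}, x_{k+1}^{(j)})$ appears with total coefficient at most $1$ after the geometric factors $1/2^{i-k}$ are summed. The heart of the proof — and the main obstacle — is a strict contraction of the form
\[
\ell(x^{(j+2)}) - d(p,q) \le (1 - c/n^2)\bigl(\ell(x^{(j)}) - d(p,q)\bigr) + O(n\delta)
\]
for some absolute constant $c > 0$. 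Two iterations restore the chain's orientation, so the natural object is the two-step update. In a Euclidean comparison picture obtained from CAT(0) triangles and Busemann convexity (Lemma~\ref{lem:Busemann}), this two-step map becomes a linear averaging operator on the chain's coordinates, and the excess length is a quadratic form in the chain's perpendicular displacement from $[p,q]$. The linear-algebraic input is that this averaging operator, restricted to chains with fixed endpoints, has spectral gap $\Theta(1/n^2)$, mirroring the discrete Laplacian on a path of $n+1$ vertices. Transferring this Euclidean estimate rigorously to the CAT(0) setting via Busemann convexity is the most delicate step.

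Iterating the contraction, the exponential decrease drives the initial excess below $\epsilon/2$ once $j \ge n^2 \log(4n\ell(x)/\epsilon)$, while the steady-state $\delta$-error contributes at most another $\epsilon/2$ under $\delta \le \epsilon/(16 n^3)$, yielding $\ell(x^{(j)}) \le d(p,q) + \epsilon$. Each iteration uses $n-1$ oracle calls, so the total oracle complexity is $O(n^3 \log(nD/\epsilon))$ as claimed, using $\ell(x) \le nD/2$ to convert the logarithm.
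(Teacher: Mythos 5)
Your proof of part (ii) follows the paper's induction on $i$ essentially verbatim, but your claim that ``the cumulative drift across all iterations stays negligible'' is not justified for the full statement, which ranges over all $j \ge 0$: the drift grows like $4j\delta$, which is unbounded in $j$. The paper handles this by splitting into two cases, using $\gap(x^{(j)}) \le \gap(x) + 4j\delta$ only for $j \le n^2\log(4n\ell(x)/\epsilon)$, and for larger $j$ invoking the convergence estimate of Lemma~\ref{lem:convergence} to argue $\gap(x^{(j)})$ stays close to $\gap(\hat x)$ via inequality (\ref{eq:reference_approximation}). Without this second case you cannot conclude $d(z_{n-i+1}, x_i^{(j)}) \le D$ for all $j$.

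Your treatment of part (i) departs from the paper's and has a genuine gap. You want to establish a direct contraction
\[
\ell(x^{(j+2)}) - d(p,q) \le \bigl(1 - c/n^2\bigr)\bigl(\ell(x^{(j)}) - d(p,q)\bigr) + O(n\delta),
\]
arguing by analogy to a discrete Laplacian on a path. But the excess length is not a linear functional of the chain, and there is no given mechanism to pass from Busemann convexity of $d$ to a spectral statement about the two-step update viewed as an operator on excess length; you yourself flag this transfer as ``the most delicate step.'' The paper's key device, which your proposal is missing, is the \emph{reference chain} $\hat x$: one controls the nonnegative vector $v^{(k)}$ with entries $d(x_i^{(k)}, \hat x_i^{(k)})$, and Busemann convexity applied at each $\delta$-midpoint step yields the genuinely linear recursion $2v^{(k+1)} \le Kv^{(k+1)} + Jv^{(k)} + 2\delta\mathbf 1$, hence $v^{(k+1)} \le A_{n-1}v^{(k)} + 2\delta\mathbf 1$. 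The matrix $A_{n-1}$ is not a Laplacian; it has entries $(1/2)^{n+1-i-j}$ on the anti-triangle $i+j\le n$, and the bound $\rho(A_{n-1}) \le 1-1/(n-1)^2$ is proved in Lemma~\ref{lem:sp} by exhibiting an explicit positive test vector $u_k = k(n-k)+n^2$ and applying Collatz--Wielandt. Once $v^{(j)}$ is small, the length bound in (i) falls out immediately from the triangle inequality and $\ell(\hat x) = d(p,q)$. This linearization via the reference chain is what makes the argument rigorous, and it is the idea your proposal lacks.
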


\begin{ex}
We give an example of CAT(0) spaces to which our algorithm is applicable.
A \textit{$B_2$-complex} is a two dimensional piecewise Euclidean complex in which each 2-cell is isomorphic to an isosceles right triangle with short side of length one~\cite{Ger}.
A CAT(0) $B_2$-complex is called a \textit{folder complex}~\cite{Che}; see Figure~\ref{fig:FigFolder} for an example.
\begin{figure}[t]
	\centering
	\includegraphics[bb=0 0 281 96]{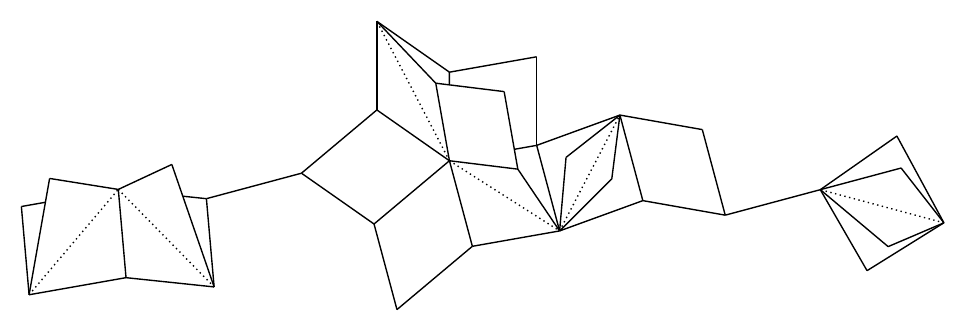}
	\caption{A folder complex.}
	\label{fig:FigFolder}
\end{figure}
One can show that for a folder complex $\mathcal F$, computing the geodesic between two points $p, q \in \mathcal F$ with $d(p, q) \le 1$ can be reduced to an easy calculation on
a subcomplex of $\mathcal F$ having a few cells.
This implies that our algorithm enables us to find geodesics between two points in a folder complex $\mathcal F$ in time bounded by a polynomial in the size of $\mathcal F$.
\end{ex}

\subsection{Analysis}\label{subsec:analysis}
For any chain $x \in X^{n+1}$,
we define the \textit{reference chain} $\hat x \in X^{n+1}$ of $x$ as follows: $\hat x_0 := x_0$ and $\hat x_i := \gamma((i+1)/(n+1))$ for $i = 1,2, \dots, n$,
where $\gamma : [0, 1] \to X$ is the geodesic with $\gamma(0) = x_0$ and $\gamma(1) = x_n$.
Reference chains are designed not to be equally spaced but to have a double gap in the beginning so that the analysis of the algorithm will be easier.
Note that the reference chain $\hat x$ of $x$ is determined just by its end components $x_0, x_n$, and therefore
for any chain $x$ and any even $\delta$-halved chain $x^{(2k)}$ of $x$ their reference chains coincide: $\hat x^{(2k)} = \hat x$.
A key observation that leads to Theorem~\ref{thm:main1} is that: For any chain $x \in X^{n+1}$ and any $k$-th $\delta$-halved chain $x^{(k)}$ of $x$ with $k$ sufficiently large and $\delta$ sufficiently small,
the distance between $x^{(k)}$ and its reference chain $\hat x^{(k)}$ is small enough for its length $\ell (x^{(k)})$ to approximate well $d(x_0, x_n)$;
moreover, the value of such a $k$ can be bounded by a polynomial in $n$.
The next lemma states this fact.

\begin{lem}\label{lem:convergence}
Let $x \in X^{n+1}$.
Any $k$-th $\delta$-halved chain $x^{(k)}$ of $x$ satisfies
\begin{equation*}
	d(x_i^{(k)}, \hat x^{(k)}_i) \le (5/4) \ell (x) e^{-k/n^2}  + 3 n^2\delta
\end{equation*}
for $i = 1, 2, \dots, n-1$, where $e$ is the base of the natural logarithm.
\end{lem}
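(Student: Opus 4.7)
The plan is to reduce the geometric claim to a linear recursion via Busemann convexity, and then to analyze that recursion spectrally. First I would verify a simple but crucial fact: when $\delta = 0$, the exact halving of a reference chain $\hat{x}$ is again a reference chain---namely the reference chain of the halved chain---and this is exactly what the ``double gap'' in the definition of $\hat{x}$ is engineered to produce. With the identification of $\hat{x}^{(k+1)}$ as the exact halving of $\hat{x}^{(k)}$ in hand, Lemma~\ref{lem:Busemann} yields the midpoint contraction
\[
d(\mathrm{mid}(a,b),\mathrm{mid}(a',b')) \le \tfrac{1}{2}\bigl(d(a,a') + d(b,b')\bigr),
\]
and absorbing the $\delta$-slack in the algorithm by the triangle inequality gives the coordinatewise recursion
\[
\beta^{(k+1)}_i \le \tfrac{1}{2}\bigl(\beta^{(k+1)}_{i+1} + \beta^{(k)}_{n-i}\bigr) + \delta, \qquad i = 1, \ldots, n-1,
\]
where $\beta^{(k)}_i := d(x^{(k)}_i, \hat{x}^{(k)}_i)$ and $\beta^{(k)}_0 = \beta^{(k)}_n = 0$. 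Unrolling this one-sided recursion from $i = n-1$ down to $i = 1$ puts it in matrix form $\beta^{(k+1)} \le T\,\beta^{(k)} + 2\delta\,\mathbf{1}$, where $T$ is the explicit $(n-1)\times(n-1)$ anti-triangular matrix with $T_{ij} = 2^{-(n-i-j+1)}$ for $i + j \le n$ and $0$ otherwise.

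The heart of the proof is then a spectral estimate for $T$. Directly summing the geometric series shows that $T^2$ is symmetric, with entries
\[
(T^2)_{il} = \tfrac{1}{3}\bigl(2^{-|i-l|} - 2^{-(2n - i - l)}\bigr),
\]
so bounding $\|T^k\|_\infty$ reduces to bounding $\rho(T^2)$. I would exhibit a strictly positive vector $v$---the natural candidate being the discrete-sine vector $v_i = \sin(\pi i/n)$ suggested by the Dirichlet boundary data---and check by direct calculation that $T^2 v \le (1 - c/n^2)\,v$ coordinatewise for an explicit constant $c > 0$, so that the Collatz--Wielandt bound gives $\rho(T^2) \le 1 - c/n^2$. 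The $1/n^2$ scaling is the discrete Poincar\'e-type gap forced by the boundary conditions $\beta_0 = \beta_n = 0$, but it has to be extracted from the precise geometric-series structure of $T^2$ rather than from classical tridiagonal theory.

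Combined with the initial bound $\|\beta^{(0)}\|_\infty \le (5/4)\,\ell(x)$---obtained by comparing each chain component to its geodesic reference via the triangle inequality, and using the double-gap asymmetry to sharpen the constant---this gives $\|T^{k}\beta^{(0)}\|_\infty \le (5/4)\,\ell(x)\,e^{-k/n^2}$, while the accumulated $\delta$-contribution $2\delta \sum_{j\ge 0}T^j \mathbf{1} = 2\delta(I - T)^{-1}\mathbf{1}$ is $O(n^2 \delta)$ by the same spectral gap, yielding the $3n^2 \delta$ error term. The hardest step is the Lyapunov inequality $T^2 v \le (1 - c/n^2)\,v$: although $T^2$ is explicit and symmetric, it is a full (not tridiagonal) matrix whose entries decay only geometrically off the diagonal, so verifying the spectral gap requires careful hand-computation that exploits the specific structure rather than invoking off-the-shelf discrete-Laplacian theory.
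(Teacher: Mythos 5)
Your setup is right and matches the paper exactly: the Busemann midpoint contraction plus the triangle inequality for the $\delta$-slack gives
$2\beta^{(k+1)}_i \le \beta^{(k+1)}_{i+1} + \beta^{(k)}_{n-i} + 2\delta$, unrolling it ``from left to right'' gives $\beta^{(k+1)} \le T\beta^{(k)} + 2\delta\1$ with $T = A_{n-1}$, and your identification of $\hat x^{(k+1)}$ as the exact halving of $\hat x^{(k)}$ (the point of the ``double gap'') is the correct observation. Your formula $(T^2)_{il} = \tfrac{1}{3}\bigl(2^{-|i-l|} - 2^{-(2n-i-l)}\bigr)$ also checks out. However, the spectral step---the heart of the lemma---has two genuine gaps.

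First, you never actually verify the Lyapunov inequality $T^2 v \le (1-c/n^2)v$ for $v_i = \sin(\pi i/n)$; you only assert that it ``requires careful hand-computation.'' This is precisely where the work of the proof lives, and the computation is not as innocuous as you suggest: the Hankel correction term $2^{-(2n-i-l)}$ and the truncation of the geometric sine sums near $i=1$ and $i=n-1$ interact with the smallness of $\sin(\pi i/n)$ at the boundary, so the inequality is delicate exactly where the comparison vector is tiny. Second---and this is the more structural problem---even granting $\rho(T) \le e^{-1/n^2}$, your claimed step $\|T^k\beta^{(0)}\|_\infty \le (5/4)\ell(x)e^{-k/n^2}$ does not follow. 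The $\ell^\infty\!\to\!\ell^\infty$ norm $\|T\|_\infty$ equals the max row sum $1-2^{-(n-1)}$, which is much larger than $e^{-1/n^2}$, so $\|T^k\|_\infty \le e^{-k/n^2}$ is false. What a positive Collatz--Wielandt vector $v$ with $Tv\le \lambda v$ actually buys you is $T^k\1 \le (\max_i v_i / \min_i v_i)\,\lambda^k\,\1$, so the dynamic range $\max v/\min v$ enters the final constant. For $v_i = \sin(\pi i/n)$ that ratio is $\Theta(n)$, so your route would produce a bound of order $n\,\ell(x)\,e^{-ck/n^2}$, losing a factor of $n$ relative to the lemma's claim. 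The paper's test vector $u_k = k(n-k)+n^2$ is engineered precisely to avoid this: the additive $n^2$ keeps $u$ bounded away from zero, giving dynamic range $\le 5/4$ (this is where the constant $5/4$ actually comes from, not from the initial bound on $\beta^{(0)}$, which is simply $\ell(x)$), and it is tested directly on $T$ rather than $T^2$, avoiding the Toeplitz--Hankel bookkeeping altogether. To repair your approach you would need to replace the sine vector by one that both satisfies a Lyapunov inequality with the right $1/n^2$ gap and has $O(1)$ dynamic range; a quadratic plus a constant, as in the paper, is the natural and apparently unavoidable choice.
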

\begin{proof}
Let $\{ x^{(j)} \}_{j \ge 0}$ be a sequence of chains in $X^{n+1}$ such that $x^{(0)} = x$ and $x^{(j)}$ is a $\delta$-halved chain of $x^{(j-1)}$ for $j \ge 1$.
Fix an integer $1 \le i \le n-1$ and an integer $k \ge 0$.
Note that by definition $x_i^{(k+1)}$ is a $\delta$-midpoint of $x_{i+1}^{(k+1)}$ and $x_{n-i}^{(k)}$ and that $\hat x^{(k+1)}_i$ is the midpoint of $\hat x^{(k+1)}_{i+1}$ and $\hat x^{(k)}_{n-i}$.
Hence, by Lemma~\ref{lem:Busemann} and the triangle inequality, we have
\begin{align}\label{eq:mid_inequality}\begin{split}
	2d(x^{(k+1)}_i, \hat x^{(k+1)}_i ) &\le 2d(w, \hat x^{(k+1)}_i ) + 2\delta \\
&\le d( x^{(k+1)}_{i+1}, \hat x^{(k+1)}_{i+1}) + d(x^{(k)}_{n-i}, \hat x^{(k)}_{n-i}) + 2\delta,
\end{split}\end{align}
where $w$ is the midpoint of $x^{(k+1)}_{i+1}$ and $x^{(k)}_{n-i}$. See Figure~\ref{fig:FigProof1} for intuition.
\begin{figure}[t]
	\centering
	\includegraphics[bb= 0 0 295 111]{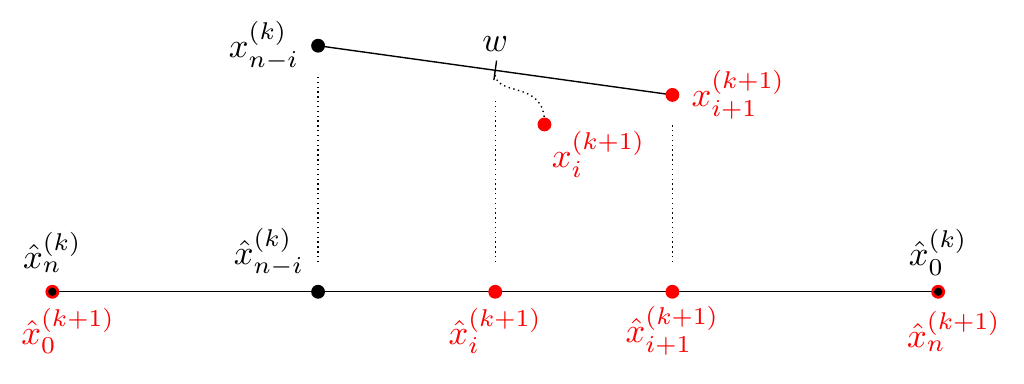}
	\caption{To the proof of Lemma~\ref{lem:convergence}. The chain $x^{(j)}$ is a $j$-th $\delta$-halved chain of $x$ for $j = k, k+1$.}
	\label{fig:FigProof1}
\end{figure}

Let $v^{(k)}$ be a column vector of dimension $n-1$ whose $i$-th entry equals $d(x^{(k)}_i, \hat x^{(k)}_i)$ for $i = 1, 2, \dots, n-1$.
Let $J$ be a square matrix of order $n-1$ whose $(i, j)$ entry equals $1$ if $i + j = n$ and $0$ otherwise.
Let $K$ be a square matrix of order $n-1$ whose $(i, j)$ entry equals $1$ if $j = i + 1$ and $0$ otherwise.
Then, by (\ref{eq:mid_inequality}) we have $2v^{(k+1)} \le K v^{(k+1)} + J v^{(k)} + 2 \delta \1$ for each $k \ge 0$,
where $\1$ is a column vector with all entries equal to $1$.
Let $A_{n-1}$ be a square matrix of order $n-1$ whose $(i, j)$ entry equals $(1/2)^{n+1-i-j}$ if $i + j \le n$ and $0$ otherwise.
Then one can easily see that $(2I - K)^{-1}J = A_{n-1}$.
Hence we have
\begin{equation}\label{eq:recursion}
	v^{(k+1)} \le A_{n-1} v^{(k)} + A_{n-1} J^{-1} (2 \delta \1) \le A_{n-1} v^{(k)} + 2 \delta \1
\end{equation}
for each $k \ge 0$.
We show that
\begin{equation}\label{eq:bounds}
	v^{(k)} \le ((5/4)\ell (x)e^{-k/n^2} + 3n^2 \delta )\1
\end{equation}
for any integer $k \ge 0$.
The inequality (\ref{eq:recursion}) inductively yields that $v^{(k)} \le (A_{n-1})^k v^{(0)} + 2 \delta(I + A_{n-1} + \dots + (A_{n-1})^{k-1}) \1 \le \ell (x) (A_{n-1})^k \1 + 2 \delta (I-A_{n-1})^{-1} \1$.
Here, the inequality $v^{(0)} \le \ell (x) \1$ comes from the triangle inequality. Indeed, we have
\begin{align*}
d(x_i, \hat x_i) &\le \min \{ d(x_0, \hat x_i) + \textstyle\sum_{j=0}^{i-1} d(x_j, x_{j+1}), \ d(\hat x_i, x_n) + \textstyle\sum_{j=i}^{n-1} d(x_j, x_{j+1}) \} \\
&\le (d(x_0, x_n) + \ell (x))/2 \le \ell (x)
\end{align*}
for $i = 1, 2, \dots, n-1$.
In Lemma~\ref{lem:sp} below, we prove $(I-A_{n-1})^{-1} \1 \le (5(n-1)^2/4) \1$ (for $n- 1 \ge 2$).
This yields that $(I-A_{n-1})^{-1} \1 \le (3/2)n^2 \1$ for $n \ge 2$.
Also, we prove $(A_{n-1})^{k} \1 \le (5/4)e^{-k/(n-1)^2} \1$ (for $n-1 \ge 2$) in Lemma~\ref{lem:sp}.
This implies that $(A_{n-1})^k \1 \le (5/4) e^{-k/n^2} \1$ for $n \ge 2$.
This proves (\ref{eq:bounds}) and therefore completes the proof of the lemma.
\end{proof}

Let us now prove Theorem~\ref{thm:main1}.

\begin{proof}[Proof of Theorem~\ref{thm:main1}]
We may assume that $n \ge 2$.
We first show~\ref{item:1}. If $\delta \le \epsilon/(16n^3)$ and $j \ge n^2 \log(4n \cdot \ell (x)/\epsilon)$, then by Lemma~\ref{lem:convergence}, any $j$-th $\delta$-halved chain $x^{(j)}$ of $x$ satisfies $d(x^{(j)}_i, \hat x^{(j)}_i) \le 5\epsilon/(16n) + 3\epsilon/(16n) = \epsilon/(2n)$
for $i = 1, 2, \dots, n-1$. Hence one has
\begin{align}\label{eq:reference_approximation}\begin{split}
d(x_i^{(j)}, x_{i+1}^{(j)}) &\le d(x_i^{(j)}, \hat x_i^{(j)}) + d(\hat x_i^{(j)}, \hat x_{i+1}^{(j)}) + d(\hat x_{i+1}^{(j)}, x_{i+1}^{(j)}) \\
&\le d(\hat x_i^{(j)}, \hat x_{i+1}^{(j)}) + \epsilon/n
\end{split}\end{align}
for $i = 0, 1, \dots, n-1$.
This implies that
$\ell (x^{(j)}) = \sum_{i = 0}^{n-1} d(x_i^{(j)}, x_{i+1}^{(j)}) \le \sum_{i = 0}^{n-1}( d(\hat x_i^{(j)}, \hat x_{i+1}^{(j)}) + \epsilon/n ) = d(x_0, x_n) + \epsilon = d(p, q) + \epsilon$,
and therefore completes the proof of~\ref{item:1}.

To prove~\ref{item:2}, we first show
\begin{equation}\label{eq:gap_bound}
	d(z_{n-i+1}, x_i^{(j)}) \le \gap(x^{(j)}) + 2 \delta \quad (i = 1, 2, \dots, n; j \ge 0),
\end{equation}
by induction on $i$.
The case $i = 1$ being trivial, suppose that $i \ge 2$.
Since $z_{n-i+1}$ is a $\delta$-midpoint of $z_{n-i+2}$ and $x_{i-1}^{(j)}$, the triangle inequality and the induction yield
$d(z_{n-i+1}, x_{i}^{(j)}) \le \delta + d(z_{n-i+2}, x_{i-1}^{(j)})/2 + d(x_{i-1}^{(j)}, x_{i}^{(j)}) \le \delta + (\gap(x^{(j)})/2 + \delta) + \gap(x^{(j)})/2 = \gap(x^{(j)}) + 2 \delta$,
which completes the induction.
See Figure~\ref{fig:induction} for intuition.
\begin{figure}[t]
	\centering
	\includegraphics[bb=0 0 323 162]{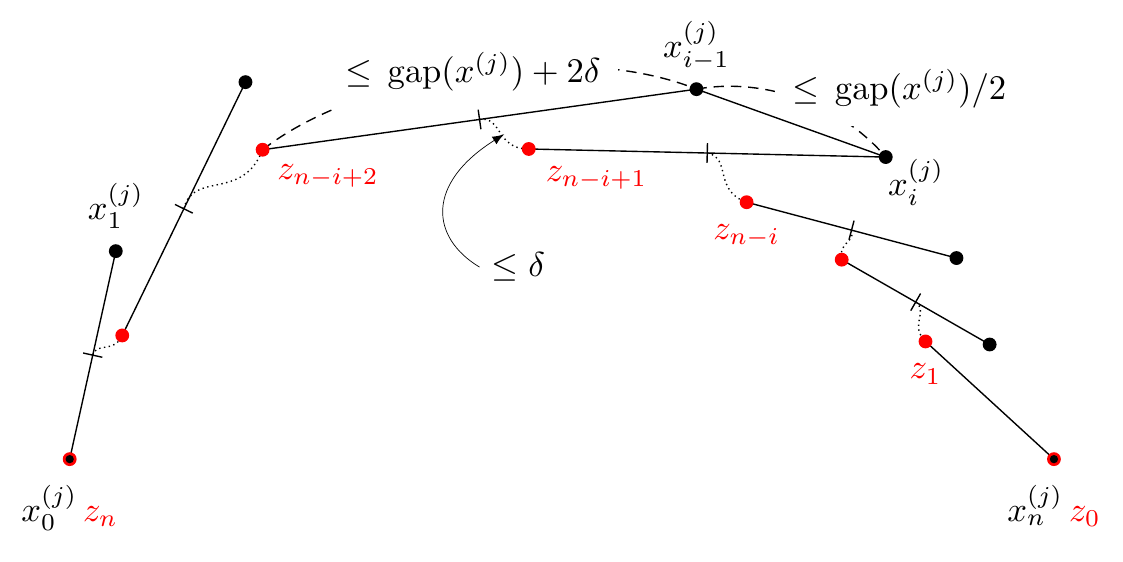}
	\caption{To the proof of Theorem~\ref{thm:main1}~\ref{item:2}. The induction hypothesis $d(z_{n-i+2}, x_{i-1}^{(j)}) \le \gap(x^{(j)}) +2 \delta$ and the triangle inequality yield the next step $d(z_{n-i+1}, x_{i}^{(j)}) \le \gap(x^{(j)}) +2 \delta$.}
	\label{fig:induction}
\end{figure}

It follows from (\ref{eq:gap_bound}) that $\gap(x^{(j+1)}) \le \gap(x^{(j)}) + 4 \delta$ for $j \ge 0$.
Indeed, the case $i = n$ in (\ref{eq:gap_bound}) implies that $d(z_1, z_0) = d(z_1, x^{(j)}_n) \le \gap(x^{(j)}) + 2 \delta$; on the other hand,
by the triangle inequality and (\ref{eq:gap_bound}), one has $d(z_{n-i+1}, z_{n-i}) \le d(z_{n-i+1}, x^{(j)}_{i})/2 + \delta \le \gap(x^{(j)})/2 + 2 \delta$
for $i = 1, 2, \dots, n-1$.
Thus, one has $\gap(x^{(j+1)}) \le \max \{ \gap(x^{(j)}) + 2 \delta, 2 (\gap(x^{(j)})/2 + 2 \delta) \} = \gap(x^{(j)}) + 4 \delta$.

The inequality (\ref{eq:gap_bound}) implies that in order to prove~\ref{item:2} it suffices to show that $\gap(x^{(j)}) + 2 \delta \le D$ for all $j \ge 0$.
Suppose that $\delta \le \epsilon / (16n^3)$. We consider two cases.

\noindent\textbf{Case 1}: $j \le n^2 \log(4n \cdot \ell (x)/\epsilon)$.
Note that $\ell (x) \le n \cdot \gap(x)$ and that $\gap(x^{(j)}) \le \gap(x) + 4j\delta$.
However roughly one estimates an upper bound of $4j\delta$, one can get
\begin{equation*}
4j\delta \le 4 \cdot \frac{\epsilon}{16 n^3} \cdot n^2 \log \frac{4n^2 \cdot \gap(x)}{\epsilon} = \frac{\epsilon}{4 n} \left( \log \frac{\gap(x)}{\epsilon} +2 \log 2n \right ) \le \frac{\gap(x)}{4 n e} + \frac{\epsilon}{e},
\end{equation*}
where the last inequality comes from the fact that $\log t \le t/e$ for any $t > 0$.
It is easy to see that $\gap(x^{(j)}) + 2 \delta \le \gap(x) + \gap(x)/(4ne) + \epsilon/e + \epsilon/(8n^3) \le D$, provided $\gap(x) \le D/2 - \epsilon$.

\noindent\textbf{Case 2}: $j \ge n^2 \log(4n \cdot \ell (x)/\epsilon)$.
Recall (\ref{eq:reference_approximation}).
Since $d(x_0, x_n)/(n+1) \le \gap(x)/2$, we have
\begin{equation*}
\gap(x^{(j)}) \le \max \{ \gap(x) + \epsilon/n, 2(\gap(x)/2 + \epsilon/n) \} = \gap(x) + 2\epsilon/n.
\end{equation*}
It is easy to see that $\gap(x^{(j)}) + 2 \delta \le \gap(x) + 2\epsilon/n + \epsilon/(8n^3) \le D$, provided $\gap(x) \le D/2 - \epsilon$.

From \ref{item:1} and \ref{item:2},
we can show that if $\gap(x) \le D/2 - \epsilon$, then one can find a chain $y \in X^{n+1}$ satisfying $y_0 = p, y_n = q$ and $\ell (y) \le d(p, q) + \epsilon$, with $O(n^3 \log (nD/\epsilon))$ oracle calls.
Indeed, for $k := \tcei{n^2 \log(4n \cdot \ell (x)/\epsilon)}$, one can find a $k$-th $\delta$-halved chain $x^{(k)}$ of $x$ with $O(nk) = O(n^3 \log (nD/\epsilon))$ oracle calls, from~\ref{item:2};
its length $\ell (x^{(k)})$ is at most $d(p, q) + \epsilon$, from~\ref{item:1}.
\end{proof}

We end this section by showing the lemma used in the proof of Lemma~\ref{lem:convergence}.
Let $A_n$ be an $n \times n$ matrix whose $(i, j)$ entry is defined by
\begin{equation}\label{eq:An}
	(A_n)_{ij} := \begin{cases} (1/2)^{n+2-i-j} & (i + j \le n + 1), \\ 0 & (\text{otherwise}) \end{cases}
\end{equation}
for $i , j = 1, 2, \dots, n$.
Since $A_n$ is a nonnegative matrix, its spectral radius $\rho(A_n)$ is at most the maximum row sum of $A_n$, which immediately yields that $\rho(A_n) \le 1 - (1/2)^n$.
This inequality, however, is not tight unless $n =1$.
In fact, one can obtain a more useful upper bound of $\rho(A_n)$.
\begin{lem}\label{lem:sp}
Let $n > 1$ be an integer, and let $A_n$ be an $n \times n$ matrix defined by (\ref{eq:An}).
Then its spectral radius $\rho(A_n)$ is at most $1 - 1/n^2$.
In addition, one has $(I-A_n)^{-1} \1 \le (5n^2/4) \1$ and $(A_n)^{k} \1 \le (5/4)e^{-k/n^2} \1$ for any integer $k \ge 0$.
\end{lem}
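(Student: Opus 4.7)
The plan is to derive all three assertions from an explicit solution of the linear system $(I - A_n) x = \mathbf{1}$.

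First I would exploit the Hankel structure of $A_n$: since $(A_n)_{ij} = (1/2)^{n+2-i-j}$ depends only on $i + j$, a direct calculation yields, for every vector $v$, the recurrence $(A_n v)_i = \tfrac{1}{2}[(A_n v)_{i+1} + v_{n+1-i}]$ for $i = 1, \ldots, n-1$, with boundary $(A_n v)_n = v_1/2$. Plugging $(A_n x)_i = x_i - 1$ into this recurrence turns $(I - A_n) x = \mathbf{1}$ into the system $2 x_i - x_{i+1} - x_{n+1-i} = 1$ together with $x_1 = 2 x_n - 2$. A quadratic ansatz $x_i = a + bi + ci^2$ forces $c = -1/2$, $b = n/2$, $a = (n+3)/2$, so that
\[
x_i = \frac{n + 3 + i(n-i)}{2},
\]
as one verifies by direct substitution.

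Since $x_i$ is a concave quadratic in $i$ with vertex at $i = n/2$, $\max_i x_i \le n^2/8 + n/2 + 3/2 \le n^2 \le 5n^2/4$ for $n \ge 2$. This immediately gives $(I - A_n)^{-1}\mathbf{1} = x \le (5n^2/4)\mathbf{1}$. The spectral bound follows from Collatz--Wielandt: combining $A_n x = x - \mathbf{1}$ with $x_i \le n^2$ yields $(A_n x)_i \le x_i - x_i/n^2 = (1 - 1/n^2) x_i$, so $A_n x \le (1 - 1/n^2) x$ with $x > 0$, and therefore $\rho(A_n) \le 1 - 1/n^2$.

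For the exponential bound $(A_n)^k \mathbf{1} \le (5/4) e^{-k/n^2} \mathbf{1}$, I would construct a vector $w$ with $\mathbf{1} \le w \le (5/4)\mathbf{1}$ and $A_n w \le (1 - 1/n^2) w$. Given such a $w$, iteration yields $A_n^k \mathbf{1} \le A_n^k w \le (1-1/n^2)^k w \le (5/4)(1-1/n^2)^k \mathbf{1}$, and the elementary bound $1 - t \le e^{-t}$ completes the argument. My candidate is $w := \mathbf{1} + \epsilon x$ for a scalar $\epsilon > 0$: the condition $w \le (5/4)\mathbf{1}$ becomes $\epsilon \le 1/(4 \max_j x_j)$, while using $A_n x = x - \mathbf{1}$ one computes that $A_n w \le (1 - 1/n^2) w$ amounts to the per-coordinate inequalities
\[
\epsilon \cdot \frac{n^2 - x_i}{n^2} \ge \frac{1}{n^2} - \Bigl(\tfrac{1}{2}\Bigr)^{n+1-i} \qquad (i = 1, \ldots, n).
\]
The main technical step will be to verify that this admissible range of $\epsilon$ is nonempty for every $n \ge 2$. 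This reduces to an elementary estimate: the right-hand side above is at most $1/n^2$, while $n^2 - x_i \ge (7/8) n^2 - O(n)$, so for large $n$ both bounds are compatible with any $\epsilon$ of order $1/n^2$; the small cases (where for $n \le 4$ the right-hand side is in fact $\le 0$ for all $i$, making the lower bound trivial) are dispatched by direct inspection.
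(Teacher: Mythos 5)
Your solution to the first two assertions is correct and takes a genuinely different route from the paper's. You solve $(I-A_n)x = \1$ explicitly via the quadratic ansatz, obtaining $x_i = \tfrac{1}{2}(n+3+i(n-i))$, and then read off $(I-A_n)^{-1}\1 = x \le n^2\1$ and the Collatz--Wielandt bound $\rho(A_n) \le 1 - 1/n^2$ directly. The paper instead writes down the vector $u_k = k(n-k) + n^2$ out of thin air and verifies $A_n u \le (1-1/n^2)u$ by summing geometric series. These two vectors are affinely related: $u = 2x + (n^2-n-3)\1$, so your $x$ is in some sense the more natural object and explains where $u$ comes from. (One small logical caveat: you should establish $\rho(A_n) \le 1-1/n^2$ from $A_n x \le (1-1/n^2)x$, $x>0$ \emph{before} invoking invertibility of $I-A_n$ to write $x = (I-A_n)^{-1}\1$, but this is just an ordering issue.)

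For the third assertion, however, your construction $w = \1 + \epsilon x$ leaves a real gap and is more awkward than necessary. The asymptotic estimate you give (RHS $\le 1/n^2$, $n^2 - x_i \ge (7/8)n^2 - O(n)$, upper bound $\approx 2/n^2$) only certifies a nonempty $\epsilon$-range for roughly $n \ge 9$, while the parenthetical about the RHS being $\le 0$ only covers $n \le 4$; the cases $5 \le n \le 8$ are left to an unspecified ``direct inspection,'' and as stated the argument does not tell the reader that these cases actually need checking. They do work out, but this is a loose end. The cleaner fix, which is what the paper in effect does, is to use the shift directly: set $u := 2x + (n^2-n-3)\1$, i.e.\ $u_k = k(n-k)+n^2$. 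Then $\1 \le u/n^2 \le (5/4)\1$ is \emph{immediate} from $0 \le k(n-k) \le n^2/4$, and $A_n u \le (1-1/n^2)u$ follows from your own identity $A_n x = x - \1$ together with the trivial row-sum computation $(A_n\1)_k = 1 - (1/2)^{n+1-k}$. From there the exponential bound is a two-line consequence, with no case analysis on $n$. So your explicit solution $x$ is genuinely useful -- it motivates the ``magic'' vector -- but you should exploit the affine freedom $x \mapsto \alpha x + \beta\1$ to flatten it rather than perturbing $\1$ by a small multiple of $x$.
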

\begin{proof}
Let $A := A_n$ for simplicity.
Let $u$ be a positive column vector of dimension $n$ whose $k$-th entry is defined by $u_k := k(n-k) + n^2$ for $k = 1, 2, \dots, n$.
By the Collatz--Wielandt inequality, in order to show $\rho(A) \le 1 - 1/n^2$ it suffices to show that $A u \le (1-1/n^2) u$.
The $k$-th entry of the vector $A u$ is 
\begin{equation*}
(Au)_k = \sum_{j=1}^{n+1-k} \frac{u_j}{2^{n+2-k-j}} = \frac{1}{2^{n+2-k}} \sum_{j=1}^{n+1-k} 2^j (-j^2 + nj + n^2).
\end{equation*}
Hence, using the general formulas
\begin{equation*}
	\sum_{j=1}^{m} j \cdot 2^j = 2 + 2^{m+1}(m-1)\quad \text{and}\quad  \sum_{j=1}^{m} j^2 \cdot 2^j = -6 + 2^{m+1}((m-1)^2 + 2),
\end{equation*}
we have
\begin{equation*}
	(Au)_k = u_k - 2 - \frac{n^2 - n - 3}{2^{n+1-k}}.
\end{equation*}
It is easy to see that for $n \ge 2$ and $1 \le k \le n$ one has
\begin{equation*}
	\frac{u_k}{n^2} = 1 + \frac{k(n-k)}{n^2} \le \frac{5}{4} \le \left( 2 - \frac{1}{2^{n+1-k}} \right) + \frac{(n-2)(n+1)}{2^{n+1-k}},
\end{equation*}
which implies that
\begin{equation*}
	\frac{u_k}{n^2} \le 2 + \frac{n^2 - n - 3}{2^{n+1-k}} \quad (k = 1, 2, \dots, n).
\end{equation*}
This completes the proof of the inequality $A u \le (1 - 1/n^2)u $.

Let us show the latter part of the lemma.
Note that $\1 \le (1/n^2)u \le (5/4) \1$. Since $(1/n^2)u \le (I - A) u$ and $(I - A)^{-1}$ is a nonnegative matrix (as $\rho(A) < 1$),
we have $(I-A)^{-1} \1 \le (1/n^2)(I-A)^{-1} u \le u \le  (5n^2/4) \1$.

Since $Au \le (1 - 1/n^2)u \le e^{-1/n^2}u$, we have $A^{k} u \le e^{-k/n^2} u$ for any integer $k \ge 0$.
Hence, $A^{k} \1 \le (1/n^2)A^{k} u\le (1/n^2)e^{-k/n^2}u \le (5/4)e^{-k/n^2}\1$.
\end{proof}

\begin{rem}
In proving Theorem~\ref{thm:main1}, we utilized only the convexity of the metric of $X$.
Hence our algorithm works even when $X$ is a Busemann convex space.
\end{rem}


\section{Computing geodesics in CAT(0) cubical complexes}\label{sec:geoCube}
In this section we give an algorithm to compute geodesics in CAT(0) cubical complexes, with an aid of the result of the preceding section.
In Section~\ref{subsec:cube} to \ref{subsec:orthant}, we recall CAT(0) cubical complexes, median graphs, PIPs and CAT(0) orthant spaces.
Section~\ref{subsec:main} is devoted to proving our main theorem.

\subsection{CAT(0) cubical complex}\label{subsec:cube}

A \textit{cubical complex} $\mathcal K$ is a polyhedral complex
where each $k$-dimensional cell is isometric to the unit cube $[0, 1]^k$ and the intersection of any two cells is empty or a single face.
The \textit{underlying graph} of $\mathcal K$ is the graph $G(\mathcal K) = (V(\mathcal K), E(\mathcal K))$,
where $V(\mathcal K)$ denotes the set of \textit{vertices} ($0$-dimensional faces) of $\mathcal K$ and $E(\mathcal K)$ denotes the set of \textit{edges} ($1$-dimensional faces) of $\mathcal K$.

A cubical complex $\mathcal K$ has an intrinsic metric induced by the $l_2$-metric on each cell.
For two points $p, q \in \mathcal K$,
a \textit{string} in $\mathcal K$ from $p$ to $q$ is a sequence of points $p = x_0, x_1, \dots, x_{m-1}, x_m = q$ in $\mathcal K$ such that for each $i = 0, 1, \dots, m-1$ there exists a cell
$C_i$ containing $x_i$ and $x_{i+1}$, and its \textit{length} is defined to be $\sum_{i=0}^{m-1}d(x_i, x_{i+1})$, where $d(x_i, x_{i+1})$ is measured inside $C_i$ by the $l_2$-metric.
The distance between two points $p, q \in \mathcal K$ is defined to be the infimum of the lengths of strings from $p$ to $q$.

Gromov~\cite{Gro} gave a combinatorial criterion which allows us to easily decide whether or not a cubical complex $\mathcal K$ is non-positively curved.
The \textit{link} of a vertex $v$ of $\mathcal K$ is the abstract simplicial complex whose vertices are the edges of $\mathcal K$ containing $v$ and where $k$ edges $e_1, \dots, e_k$ span
a simplex if and only if they are contained in a common $k$-dimensional cell of $\mathcal K$.
An abstract simplicial complex $\mathcal L$ is called \textit{flag} if any set of vertices is a simplex of $\mathcal L$ whenever each pair of its vertices spans a simplex.

\begin{thm}[Gromov~\cite{Gro}]\label{thm:Gromov}
A cubical complex $\mathcal K$ is CAT(0) if and only if $\mathcal K$ is simply connected and the link of each vertex is flag.
\end{thm}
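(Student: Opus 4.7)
The plan is to invoke the classical local-to-global principle for non-positive curvature. First, I would apply the Cartan--Hadamard theorem (see~\cite{Bri}): a complete, simply connected geodesic metric space is CAT(0) if and only if it is locally CAT(0). A cubical complex $\mathcal K$ with its intrinsic $l_2$-metric is complete (having only finitely many isometry types of cells), so under the simple-connectivity hypothesis the theorem reduces to the statement that $\mathcal K$ is locally CAT(0) everywhere if and only if the link of each vertex is flag.

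Next, I would reduce local CAT(0) to a condition purely at vertices. At any point in the interior of a cell, a neighborhood is isometric to an open subset of Euclidean space and hence is trivially CAT(0); the only nontrivial case is a vertex $v$. There a small metric ball is isometric to an open Euclidean cone over the link $\mathrm{Lk}(v)$, equipped with the \emph{all-right spherical metric}, in which each $k$-simplex is realized as a spherical simplex with all edges of length $\pi/2$. By Berestovskii's theorem, the Euclidean cone $C(Y)$ is CAT(0) if and only if $Y$ is CAT(1). Hence the theorem reduces to the purely combinatorial-geometric claim: an all-right spherical simplicial complex is CAT(1) if and only if it is flag.

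The ``only if'' direction of this claim is straightforward: if $k$ vertices of the link are pairwise joined by edges but fail to span a simplex, then one exhibits a closed local geodesic of length strictly less than $2\pi$ passing through them (e.g.\ a ``round trip'' along the boundary of an empty triangle or square), contradicting CAT(1). For the ``if'' direction, I would proceed by induction on the dimension of the link. The flag property is hereditary under taking links in a simplicial complex, so by the inductive hypothesis all iterated vertex links of $\mathrm{Lk}(v)$ are CAT(1). One then invokes the standard criterion that a piecewise spherical complex is CAT(1) provided every vertex link is CAT(1) and the complex itself contains no closed geodesic of length less than $2\pi$; a case analysis of how a locally geodesic loop in the all-right metric can enter and leave open stars of vertices supplies the required length lower bound.

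The main obstacle will be the ``if'' direction: showing that the flag condition forces CAT(1) of the link. The difficulty is in simultaneously controlling geodesics of every combinatorial type and ruling out any closing up in length less than $2\pi$. The crucial geometric estimate, and the trickiest step, is that in an all-right flag complex any locally geodesic loop that exits the open star of a vertex must accumulate arc length at least $\pi$ before it can re-enter that star; combined with the inductive hypothesis on links, this yields the required $2\pi$ lower bound on closed geodesics and completes the argument.
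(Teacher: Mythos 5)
This is Gromov's classical link condition, which the paper cites to~\cite{Gro} without giving a proof of its own, so there is nothing in the text to compare against line by line. That said, your outline is the correct and standard argument, essentially as written up in Bridson--Haefliger~\cite{Bri}: Cartan--Hadamard reduces the global CAT(0) condition to local CAT(0) (completeness follows from the fact that a cubical complex has only one isometry type of $k$-cell for each $k$, and Bridson's theorem on $M_\kappa$-complexes with finitely many shapes); local CAT(0) is automatic away from the $0$-skeleton and at a vertex reduces, via Berestovskii's cone criterion, to CAT(1) of the link in the all-right spherical metric; and the core is then Gromov's lemma that an all-right piecewise-spherical complex is CAT(1) if and only if it is flag.

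Two small remarks on the details you compress. In the ``only if'' direction of Gromov's lemma the genuine obstruction is an \emph{empty triangle} (three pairwise adjacent vertices spanning no $2$-simplex), which yields a closed local geodesic of length $3\pi/2 < 2\pi$; an empty square would give length $2\pi$ and by itself does not violate CAT(1), so the quantifier is really over triples. In the ``if'' direction, the induction on dimension together with the criterion ``all vertex links CAT(1) and no short closed geodesic'' is exactly right, and the estimate you flag as the crux -- that a local geodesic leaving the closed star of a vertex in an all-right flag complex must travel length at least $\pi$ before re-entering -- is precisely the lemma one proves there; it is nontrivial and occupies most of the appendix to Chapter~II.5 of~\cite{Bri}. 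At the level of a sketch, your proposal is sound and is the proof that the cited reference uses.
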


\subsection{Median graph}\label{subsec:median}
Let $G = (V, E)$ be a simple undirected graph.
The distance $d_G(u, v)$ between two vertices $u$ and $v$ is the length of a shortest path between $u$ and $v$.
The \textit{interval} $I_G(u, v)$ between $u$ and $v$ is the set of vertices $w \in V$ with $d_G(u, v) = d_G(u, w) + d_G(w, v)$.
A vertex subset $U \subseteq V$ is said to be \textit{gated} if for every vertex $v \in V$,
there exists a unique vertex $v' \in U$, called the \textit{gate} of $v$ in $U$, such that $v' \in I_G(u, v)$ for all $u \in U$.
Every gated subset is convex, where a vertex subset $U \subseteq V$ is said to be \textit{convex} if $I_G(u, v)$ is contained in $U$ for all $u, v \in U$.
A vertex subset $H \subseteq V$ is called a \textit{halfspace} of $G$ if both $H$ and its complement $V\bsl H$ are convex.
A graph $G$ is called a \textit{median graph} if for all $u, v, w \in V$ the set $I_G(u, v) \cap I_G(v, w) \cap I_G(w, u)$ contains exactly one element, called the \textit{median} of $u, v, w$.
Median graphs are connected and bipartite.
In median graphs $G$, every convex set $S$ of $G$ is gated.
(Indeed, for each $v \in V$ one can take a vertex $v' \in S$ such that $I_G(v', v) \cap S = \{v' \}$.
Then for any $u \in S$ the median $m$ of $u, v, v'$ should be $v'$, as $m \in I_G(u, v') \subseteq S$ and $m \in I_G(v', v)$.
This implies that $v'$ is the gate of $v$ in $S$.)
Thus, in median graphs gated sets and convex sets coincide.
A \textit{median complex} is a cubical complex derived from a median graph $G$ by replacing all cube-subgraphs of $G$ by solid cubes.
It has been shown independently by Chepoi~\cite{Che} and Roller~\cite{Rol} that median complexes and CAT(0) cubical complexes constitute the same objects:
\begin{thm}[{Chepoi~\cite{Che}, Roller~\cite{Rol}}]\label{thm:Chepoi}
The underlying graph of every CAT(0) cubical complex is a median graph, and conversely, every median complex is a CAT(0) cubical complex.
\end{thm}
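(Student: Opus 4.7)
The plan is to prove the two directions of Theorem~\ref{thm:Chepoi} separately, with the theory of hyperplanes as the central tool.

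For the forward direction (the 1-skeleton of a CAT(0) cubical complex $\mathcal K$ is a median graph), I would first introduce hyperplanes. Define an equivalence relation $\sim$ on the edges of $\mathcal K$ generated by ``$e \sim e'$ iff $e, e'$ are opposite edges of some 2-cube,'' and to each equivalence class $W$ associate the hyperplane $H_W$ formed by the midcubes (perpendicular cross-sections) of all cubes containing an edge of $W$. The decisive step is to prove the \emph{hyperplane separation theorem}: each $H_W$ is embedded and $\mathcal K \setminus H_W$ has exactly two connected components, called halfspaces. This uses simple connectivity of $\mathcal K$ (from Theorem~\ref{thm:Gromov}) together with the flag link condition, and amounts to showing that any edge path between two fixed vertices crosses $W$-edges with a fixed parity. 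With halfspaces in hand, one defines the edge-path distance $d_G(u,v)$ combinatorially as the number of hyperplanes separating $u$ and $v$; then, for any triple $u,v,w$, each hyperplane has a ``majority side'' containing at least two of them, and the unique vertex lying in every majority halfspace is the required median $m \in I_G(u,v) \cap I_G(v,w) \cap I_G(w,u)$.

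For the reverse direction (the median complex $\mathcal K$ derived from a median graph $G$ is CAT(0)), I would invoke Gromov's criterion (Theorem~\ref{thm:Gromov}) and check (a) flag links and (b) simple connectivity. For (a), suppose edges $e_1, \ldots, e_k$ at a vertex $v$ are pairwise contained in squares at $v$; by induction on $k$, one uses the median operation to promote the $(k-1)$-cube spanned by $e_1,\ldots,e_{k-1}$ together with the squares through $e_k$ into a full $k$-cube, by taking medians of opposite corners to fill in the missing $(k-1)$-face. For (b), any closed edge loop can be contracted by repeatedly replacing a subpath $u \to x \to v$ with $u \to m \to v$ where $m$ is the median of $u, x, v$ (whenever $m \ne x$); each such replacement is a homotopy across a square of $\mathcal K$, and a suitable combinatorial complexity (e.g., area in a minimal disk diagram) strictly decreases, forcing the loop to contract.

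The main obstacle is the hyperplane separation theorem in the forward direction. The subtlety lies in linking the global topological hypothesis (simple connectivity inherited from being CAT(0)) to the local combinatorial hypothesis (flag links) in order to rule out any hyperplane meeting itself along a nontrivial loop or having its complement disconnect into more than two pieces. Concretely, one argues that if an edge loop in the 1-skeleton crossed some $W$-class an odd number of times, it would bound a disk diagram in which the odd-crossing contradicts either the flag-link condition at some interior vertex (preventing one from performing a local square exchange) or simple connectivity (after collapsing the diagram). Once this is in place, existence and uniqueness of the median vertex follow from a Helly-type argument: any three halfspaces pairwise containing two of $u, v, w$ have nonempty common intersection, forced by the combinatorics of how hyperplanes can separate the three points.
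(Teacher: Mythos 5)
The paper does not prove Theorem~\ref{thm:Chepoi}; it cites it from Chepoi~\cite{Che} and Roller~\cite{Rol}, so there is no in-paper argument to compare against. Your hyperplane/wall route is the standard Sageev--Roller approach and is sound in outline, but there are two genuine gaps.

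In the forward direction, the decisive step after the separation theorem is deferred to the final sentence: that the (finitely many) majority halfspaces share a vertex is precisely the Helly property for halfspaces in a CAT(0) cube complex, a nontrivial theorem you only name but do not prove; you also need that the common vertex is unique (no hyperplane can separate two such vertices) and that it lies in each of $I_G(u,v)$, $I_G(v,w)$, $I_G(w,u)$, the latter relying on the identification of $d_G$ with the number of separating hyperplanes. In the reverse direction, the simple-connectivity argument does not work as stated: the move $u \to x \to v$ replaced by $u \to m(u,x,v) \to v$ is the identity whenever $x$ already lies on a geodesic between $u$ and $v$ (then $m = x$), so it fixes every locally geodesic loop, including the boundary of a single square; and the appeal to ``area in a minimal disk diagram'' is an unrelated technique whose interaction with the median move is never explained. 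The usual repair is to contract toward a fixed base vertex $v_0$, replacing a cycle $(u_0,\dots,u_k)$ by $(m(v_0,u_0,u_1), m(v_0,u_1,u_2),\dots)$, checking that consecutive images are adjacent or equal and that the replacement is a homotopy through squares, and showing $\sum_i d_G(v_0,u_i)$ strictly decreases unless the loop is already trivial.
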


For a cubical complex $\mathcal K$ and any $S \subseteq V(\mathcal K)$, we denote by $\mathcal K(S)$ the subcomplex of $\mathcal K$ induced by $S$.
The following property of CAT(0) cubical complexes is particularly important for us.
\begin{thm}[{\cite[Proposition 1]{CheMaf}}]\label{thm:CheMaf}
Let $\mathcal K$ be a CAT(0) cubical complex.
For any convex set $S$ of the underlying graph $G(\mathcal K)$, 
the subcomplex $\mathcal K(S)$ induced by $S$ is convex in $\mathcal K$.
\end{thm}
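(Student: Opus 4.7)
The plan is to express $\mathcal K(S)$ as an intersection of closed halfspaces of $\mathcal K$ — each convex in the CAT(0) sense — and then conclude that $\mathcal K(S)$ is convex, since an arbitrary intersection of convex subsets of a CAT(0) space is convex.

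First I would recall the basic theory of hyperplanes in a CAT(0) cubical complex: every hyperplane $\mathfrak{h}$ of $\mathcal K$ is two-sided and is itself a convex subcomplex (Sageev's theorem, which is a geometric counterpart of the flag-link criterion of Theorem~\ref{thm:Gromov}). Consequently each $\mathfrak{h}$ cuts $\mathcal K$ into two closed halfspaces $\mathcal K_{\mathfrak{h}}^{\pm}$, both of which are convex subcomplexes of the CAT(0) space $\mathcal K$. Restricted to vertices, $\mathfrak{h}$ induces a bipartition $V(\mathcal K) = H_{\mathfrak{h}}^+ \sqcup H_{\mathfrak{h}}^-$, and a standard fact from the theory of median graphs (a companion of the Chepoi--Roller correspondence, Theorem~\ref{thm:Chepoi}, via the classification of Djokovi\'c--Winkler $\Theta$-classes) is that the halfspaces of $G(\mathcal K)$ in the sense of Section~\ref{subsec:median} are precisely the $H_{\mathfrak{h}}^{\pm}$.

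Next I would invoke the classical identity $S = \bigcap_{S \subseteq H} H$, valid for every convex vertex subset $S$ of a median graph, where $H$ ranges over the graph halfspaces containing $S$. A cube $C$ of $\mathcal K$ lies in $\mathcal K(S)$ iff every vertex of $C$ is in $S$, iff every vertex of $C$ is in each graph halfspace containing $S$, iff $C \subseteq \mathcal K_{\mathfrak{h}}^{\varepsilon(\mathfrak{h})}$ for every hyperplane $\mathfrak{h}$ whose vertex-halfspace $H_{\mathfrak{h}}^{\varepsilon(\mathfrak{h})}$ contains $S$. This yields
\begin{equation*}
\mathcal K(S) \;=\; \bigcap_{\mathfrak{h}:\, S \subseteq H_{\mathfrak{h}}^{\varepsilon(\mathfrak{h})}} \mathcal K_{\mathfrak{h}}^{\varepsilon(\mathfrak{h})},
\end{equation*}
an intersection of convex subsets of $\mathcal K$, and therefore convex.

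The step requiring the most care is the bridge between combinatorial halfspaces of $G(\mathcal K)$ and topological halfspaces cut out by hyperplanes of $\mathcal K$, which rests on two-sidedness and convexity of hyperplanes. These are standard but nontrivial consequences of the flag-link condition, and I would import them from the cubulation literature~\cite{Sag, Che} rather than reprove them. Once they are in place, the remainder of the argument reduces to the elementary observation that intersections of convex sets in a CAT(0) space are convex.
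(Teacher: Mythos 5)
Your reduction to an intersection of halfspaces is a natural strategy, but as written it conflates two different notions of ``closed halfspace,'' and the key step fails under either reading. If $\mathcal K_{\mathfrak h}^{\varepsilon}$ denotes the \emph{geometric} closed halfspace (the closure of a component of $\mathcal K\setminus\mathfrak h$, which is indeed CAT(0)-convex but is \emph{not} a subcomplex), then the displayed identity is simply false. Take $\mathcal K=[0,1]^2$ and $S=\{(0,0),(0,1)\}$, a convex set of the $4$-cycle $G(\mathcal K)$: here $\mathcal K(S)=\{0\}\times[0,1]$, but the only hyperplane whose vertex-halfspace contains $S$ is $\{1/2\}\times[0,1]$, and its geometric halfspace is $[0,1/2]\times[0,1]\supsetneq\mathcal K(S)$. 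Your chain of equivalences correctly characterizes which \emph{cubes} lie in $\mathcal K(S)$, but the set $\bigcap_{\mathfrak h}\mathcal K_{\mathfrak h}^{\varepsilon(\mathfrak h)}$ is not a union of such cubes --- it also contains points like $(1/4,1/2)$, whose minimal cube is the whole square --- so ``this yields'' the set equality does not follow. Geometric halfspaces slice through carriers, and the intersection of those containing $\mathcal K(S)$ is in general strictly larger than $\mathcal K(S)$.

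If instead $\mathcal K_{\mathfrak h}^{\varepsilon}$ means the \emph{combinatorial} halfspace $\mathcal K(H_{\mathfrak h}^{\varepsilon})$, then the identity $\mathcal K(S)=\bigcap_{\mathfrak h}\mathcal K(H_{\mathfrak h}^{\varepsilon(\mathfrak h)})$ does hold (by minimality of cubes plus the separation property $S=\bigcap H$). But now the convexity of each $\mathcal K(H_{\mathfrak h}^{\varepsilon})$ is precisely the case $S=H_{\mathfrak h}^{\varepsilon}$ of Theorem~\ref{thm:CheMaf} itself, \emph{not} the two-sidedness or CAT(0)-convexity of the hyperplane $\mathfrak h$ that Sageev's theorem gives you; citing the latter does not discharge the former, and the reduction is circular. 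To repair the argument you would first have to prove that base case directly --- for example by exhibiting the $1$-Lipschitz ``folding'' retraction of $\mathcal K$ onto $\mathcal K(H_{\mathfrak h}^{\varepsilon})$ that collapses the carrier $N(\mathfrak h)\cong\mathfrak h\times[0,1]$ onto its $\varepsilon$-boundary and then invoking uniqueness of geodesics in a CAT(0) space. With that lemma in hand, your intersection argument cleanly finishes the proof; without it, the central convexity input is missing.
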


\subsection{Poset with inconsistent pairs (PIP)}\label{subsec:pip}
Barth\'{e}lemy and Constantin~\cite{Bar} established a Birkhoff-type representation theorem for median semilattices, i.e., pointed median graphs, by employing a poset with an additional relation.
This structure was rediscovered by Ardila et al.~\cite{Ard} in the context of CAT(0) cubical complexes.
An \textit{antichain} of a poset $P$ is a subset of $P$ that contains no two comparable elements.
A subset $I$ of $P$ is called an \textit{order ideal} of $P$ if $a \in I$ and $b \preceq a$ imply $b \in I$.
A poset $P$ is \textit{locally finite} if every interval $[a, b] = \set{c \in P}{a \preceq c \preceq b}$ is finite, and it has \textit{finite width} if every antichain is finite.
\begin{defi}
A \textit{poset with inconsistent pairs} (or, briefly, a \textit{PIP}) is a locally finite poset $P$ of finite width, endowed with a symmetric binary relation $\smallsmile$ satisfying:
\begin{enumerate}
\renewcommand{\labelenumi}{\theenumi}
\renewcommand{\theenumi}{\arabic{enumi})}
\setlength{\parskip}{0cm}
\setlength{\itemsep}{0cm}
	\item If $a \smallsmile b$, then $a$ and $b$ are incomparable.
	\item If $a \smallsmile b$, $a\preceq a'$ and $b \preceq b'$, then $a' \smallsmile b'$.
\end{enumerate}
A pair $\{ a, b \}$ with $a \smallsmile b$ is called an \textit{inconsistent pair}. An order ideal of $P$ is called \textit{consistent} if it contains no inconsistent pairs.
\end{defi}
For a CAT(0) cubical complex $\mathcal K$ and a vertex $v$ of $\mathcal K$, the pair $(\mathcal K, v)$ is called a \textit{rooted CAT(0) cubical complex}.
Given a poset with inconsistent pairs $P$, one can construct a cubical complex $\mathcal K_P$ as follows:
The underlying graph $G(\mathcal K_P)$ is a graph $G_P$ whose vertices are consistent order ideals of $P$ and where two consistent order ideals $I, J$ are
adjacent if and only if $|I \Delta J| = 1$; replace all cube-subgraphs (i.e., subgraphs isomorphic to cubes of some dimensions) of $G_P$ by solid cubes.
See Figure~\ref{fig:pip} for an example.
\begin{figure}[t]
	\centering
	\includegraphics[bb= 0 0 363 128]{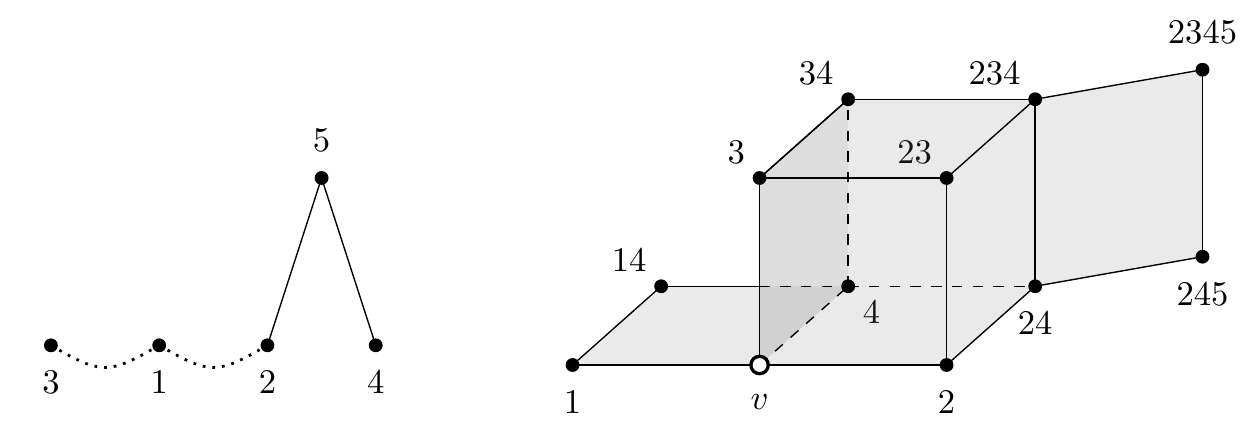}
	\caption{A poset with inconsistent pairs and the corresponding rooted CAT(0) cubical complex. Dotted line represents minimal inconsistent pairs, where an inconsistent pair $\{a, b \}$ is said to be \textit{minimal} if there is no other inconsistent pair $\{a', b'\}$ with $a' \preceq a$ and $b' \preceq b$.}
	\label{fig:pip}
\end{figure}
In fact, the resulting cubical complex $\mathcal K_P$ is CAT(0), and moreover:
\begin{thm}[{Ardila et al.~\cite{Ard}}]
The map $P \mapsto \mathcal K_P$ is a bijection between posets with inconsistent pairs and rooted CAT(0) cubical complexes.
\end{thm}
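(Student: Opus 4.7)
The plan is to exhibit explicit inverse maps in both directions. The forward map $P \mapsto (\mathcal K_P, \emp)$ is already defined, with the empty order ideal serving as root (vacuously consistent). For the inverse, given a rooted CAT(0) cubical complex $(\mathcal K, v_0)$, I would build a PIP from the \emph{hyperplanes} of $\mathcal K$, i.e., equivalence classes of edges under the equivalence relation generated by opposite edges of square $2$-cells. Each hyperplane $H$ partitions the vertex set of $\mathcal K$ into two halfspaces; let $H^+$ denote the one not containing $v_0$. Take $P$ to be the set of hyperplanes, ordered by $H_a \preceq H_b \defeq H_b^+ \subseteq H_a^+$, with inconsistency $H_a \smallsmile H_b \defeq H_a^+ \cap H_b^+ = \emp$.

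For the forward direction I would invoke Chepoi--Roller (Theorem~\ref{thm:Chepoi}) and show that $G_P$ is a median graph. The key step is that for any three consistent order ideals $I_1, I_2, I_3$, the majority set
$m := \set{a \in P}{a \text{ lies in at least two of the } I_i}$
is itself a consistent order ideal: downward closure holds because $a \in m$ with $b \prec a$ forces $b$ into every $I_i$ containing $a$, hence into at least two; consistency holds because an inconsistent pair contained in $m$ would by pigeonhole lie together in some $I_i$. Combined with the observation that shortest paths in $G_P$ flip the elements of $I \Delta J$ one at a time along any linear extension of $P$ restricted to $I \Delta J$, so that $d_{G_P}(I, J) = |I \Delta J|$, a direct element count identifies $m$ as the unique element of $I_{G_P}(I_1, I_2) \cap I_{G_P}(I_2, I_3) \cap I_{G_P}(I_3, I_1)$.

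For the reverse map I would verify the PIP axioms using standard structural facts about hyperplanes: local finiteness and finite width follow because hyperplanes separating $v_0$ from a fixed vertex $v$ biject with the edges of any edge-geodesic from $v_0$ to $v$, and pairwise-crossing hyperplanes bound a common cube; axiom (1) is immediate from $H_b^+ \ne \emp$ whenever $H_b \in P$, and axiom (2) is a short halfspace-containment chase. To close the bijection I would identify each $a \in P$ with the hyperplane of $\mathcal K_P$ cutting the edges that flip $a$, and identify a consistent order ideal $I$ with the unique vertex of $\mathcal K$ separated from $v_0$ by exactly the hyperplanes in $I$; under these identifications the induced order and inconsistency relations match on the nose. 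The main obstacle will be the median verification above together with the supporting structural lemma that every edge-geodesic in a CAT(0) cubical complex crosses each hyperplane at most once, which I would borrow from standard median-graph theory (as used in the proof of Theorem~\ref{thm:Chepoi}) rather than reprove.
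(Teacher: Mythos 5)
The paper does not prove this theorem: it is quoted from Ardila, Owen and Sullivant, and the only remark the paper adds is that the bijection can alternatively be obtained by composing Theorem~\ref{thm:Chepoi} (Chepoi--Roller) with Barth\'elemy--Constantin's bijection between PIPs and pointed median graphs. So there is no in-paper proof to compare your argument against, and it is fair to judge it on its own.

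Your outline is the standard one and is essentially sound. The forward direction (majority ideal is a consistent order ideal, $d_{G_P}(I,J) = |I\Delta J|$, hence $G_P$ is median, hence $\mathcal K_P$ is CAT(0) by Chepoi--Roller) and the identification of the two constructions as mutual inverses via the hyperplane $\{I, I\cup\{a\}\}$ and the vertex corresponding to each consistent ideal are all correct; the order and inconsistency relations do match on the nose by the halfspace-containment computations you allude to, using the consistency of the principal ideals $\downarrow a$ and $\downarrow a\cup\downarrow b$. One small inaccuracy: shortest paths from $I$ to $J$ do not follow an arbitrary linear extension of $P|_{I\Delta J}$; you must first delete $I\setminus J$ top-down and then insert $J\setminus I$ bottom-up so that each intermediate set stays a consistent order ideal. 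This is easily fixed and does not affect the count $d_{G_P}(I,J)=|I\Delta J|$.

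The one genuine gap is your verification that the hyperplane poset has \emph{finite width}. The fact you cite --- pairwise-crossing hyperplanes bound a common cube --- bounds cliques in the crossing graph (i.e., the dimension of $\mathcal K$), but an antichain in your order need not consist of pairwise-crossing hyperplanes: two hyperplanes with $H_a^+\cap H_b^+=\emptyset$ are incomparable without crossing, so an antichain can be made of arbitrarily many pairwise-\emph{disjoint} halfspaces (think of an infinite tree or an infinite star). Thus the stated argument does not establish finite width, and indeed the bijection as formulated requires a matching finiteness hypothesis on the cube-complex side; for the finite complexes the algorithm actually handles this is vacuous, but for the theorem as stated you would need to either restrict the class of rooted CAT(0) cubical complexes or supply a separate argument that your construction yields only finitely many pairwise-inconsistent hyperplanes.
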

This bijection can also be derived from Theorem~\ref{thm:Chepoi} and the result of Barth\'{e}lemy and Constantin~\cite{Bar}, who found a bijection between PIPs and pointed median graphs.

Given a poset with inconsistent pairs $P$, one can embed $\mathcal K_P$ into a unit cube in the Euclidean space as follows, which we call the \textit{standard embedding} of $P$~\cite{Ard}:
\begin{equation*}
	\mathcal K_P = \set{(x_i)_{i\in P} \in [0, 1]^{P}}{i \prec j \ \text{and}\  x_i < 1 \Ra x_j = 0, \, \text{and}\  i \smallsmile j \Ra x_ix_j = 0}.
\end{equation*}
For each pair $(I, M)$ of a consistent order ideal $I$ of $P$ and a subset $M \subseteq I_{\max}$,
where $I_{\max}$ is the set of maximal elements of $I$,
the subspace 
\begin{equation*}
C_M^I := \set{x \in \mathcal K_P}{ i \in I \bsl M  \Ra x_i = 1, \, \text{and} \ i \notin I \Ra x_i = 0} = \{1\}^{I\bsl M} \times [0, 1]^M \times \{0\}^{P \bsl I}
\end{equation*}
corresponds to a unique $|M|$-dimensional cell of $\mathcal K_P$.

\subsection{CAT(0) orthant space}\label{subsec:orthant}
Let $\real_+$ denote the set of nonnegative real numbers.
Let $\mathcal L$ be an abstract simplicial complex on a finite set $V$.
The \textit{orthant space} $\mathcal O(\mathcal L)$ for $\mathcal L$ is a subspace of $|V|$-dimensional orthant $\mathcal \real_+^V$ constructed by taking a union
of all subcones $\set{\mathcal O_S}{S \in \mathcal L}$ associated with simplices of $\mathcal L$,
where $\mathcal O_S$ is defined by $\mathcal O_S := \real_+^S \times \{ 0 \}^{V \bsl S}$ for each simplex $S \in \mathcal L$;
namely, $\mathcal O(\mathcal L) = \bigcup_{S \in \mathcal L} \set{x \in \real_+^V}{x_v = 0 \ \text{for each} \ v \notin S}$.
The distance between two points $x, y \in \mathcal O(\mathcal L)$ is defined in a similar way as in the case of cubical complexes.
An orthant space is a special instance of cubical complexes.
\begin{thm}[{Gromov~\cite{Gro}}]
The orthant space $\mathcal O(\mathcal L)$ for an abstract simplicial complex $\mathcal L$ is a CAT(0) space if and only if $\mathcal L$ is a flag complex.
\end{thm}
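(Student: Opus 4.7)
My plan is to put a cubical complex structure on $\mathcal O(\mathcal L)$ and invoke Gromov's criterion (Theorem~\ref{thm:Gromov}). Concretely, subdivide each closed orthant $\mathcal O_S = \real_+^S \times \{0\}^{V \setminus S}$ along the integer lattice into unit cubes of dimension $|S|$; these subdivisions agree on common coordinate faces, so they assemble into a globally consistent (though non-compact) cubical complex structure on $\mathcal O(\mathcal L)$.

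Simple-connectedness is immediate: $\mathcal O(\mathcal L)$ is star-shaped at the origin $0$, since $x \in \mathcal O_S$ implies $tx \in \mathcal O_S$ for $t \in [0,1]$. Hence $\mathcal O(\mathcal L)$ is contractible, so simply connected. By Theorem~\ref{thm:Gromov} it then suffices to show that $\mathcal L$ is flag iff the link of every lattice point in this cubical structure is flag.

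I would treat the origin first, which already yields the ``only if'' direction. Edges of the cubical complex emanating from $0$ are the segments $[0, e_i]$ with $\{i\} \in \mathcal L$, and a family $\{[0, e_i] : i \in T\}$ spans a cube of the complex iff $[0,1]^T \times \{0\}^{V \setminus T} \subseteq \mathcal O_T \subseteq \mathcal O(\mathcal L)$, iff $T \in \mathcal L$. Hence $\mathrm{Lk}(0) = \mathcal L$. For a general lattice point $v$, let $S := \{i \in V : v_i > 0\}$. The edges incident to $v$ split into ``interior'' directions $\pm e_i$ for $i \in S$ (always available, except $-e_i$ requires $v_i \ge 1$) and ``outward'' directions $+e_j$ for $j \in V \setminus S$ with $S \cup \{j\} \in \mathcal L$. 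Inspecting which unit cubes contain $v$ shows that $\mathrm{Lk}(v)$ is the simplicial join of a flag complex on the $S$-directions (an iterated suspension of a simplex, hence flag) with the simplicial link $\mathrm{Lk}_{\mathcal L}(S)$ of $S$ in $\mathcal L$. Since the join of flag complexes is flag, and $\mathcal L$ being flag implies every $\mathrm{Lk}_{\mathcal L}(S)$ is flag, the ``if'' direction follows.

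The main obstacle I anticipate is establishing the join decomposition of $\mathrm{Lk}(v)$ at a non-origin lattice point: one must check that a set of incident edges bounds a cube of the complex iff its $S$-part is arbitrary (among the available $\pm e_i$) and its outward part $T \subseteq V \setminus S$ satisfies $S \cup T \in \mathcal L$. This is a direct but slightly technical verification about which unit cubes near $v$ lie in $\bigcup_{T' \supseteq S,\, T' \in \mathcal L} \mathcal O_{T'}$; after this identification, the rest is routine.
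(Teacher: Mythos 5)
The paper cites this theorem to Gromov~\cite{Gro} without giving a proof, so there is no in-paper argument to compare against. Your cubulation argument is correct, and the one step you defer does go through: at a lattice point $v$ with support $S$, a set of incident edges $\{\epsilon_i e_i : i\in S_1\}\cup\{e_j : j\in T\}$ with $S_1\subseteq S$ and $T\subseteq V\setminus S$ spans a unit cube of $\mathcal O(\mathcal L)$ if and only if at most one sign is chosen per $i\in S$ and $S\cup T\in\mathcal L$; since $S\in\mathcal L$ always holds for a lattice point of $\mathcal O(\mathcal L)$, this is precisely the simplex condition in the join of the cross-polytope boundary $\ast_{i\in S}S^0$ with $\mathrm{Lk}_{\mathcal L}(S)$, and links and joins of flag complexes are again flag. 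For contrast, the argument one usually sees goes through cone geometry rather than subdivision: $\mathcal O(\mathcal L)$ is the Euclidean cone over the all-right spherical complex on $\mathcal L$ (each simplex given the spherical metric with all edge lengths $\pi/2$); by Berestovskii's theorem a Euclidean cone is CAT(0) if and only if its link at the apex is CAT(1), and by Gromov's all-right lemma an all-right spherical complex is CAT(1) if and only if the underlying simplicial complex is flag. That route only analyzes the apex link but leans on two nontrivial external results; yours examines every lattice vertex but relies solely on the cubical link criterion (Theorem~\ref{thm:Gromov}) already quoted in the paper, making it the more self-contained choice here. One small phrasing correction: $\ast_{i\in S}S^0$ is an iterated suspension of $S^0$ (the boundary of the cross-polytope), not of a simplex; the flagness conclusion is unaffected.
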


A typical example of CAT(0) orthant spaces is a \textit{tree space}~\cite{Bil}.
Owen and Provan~\cite{Owe, OwePro} gave a polynomial time algorithm to compute geodesics in tree spaces,
which was generalized to CAT(0) orthant spaces by Miller et al.~\cite{Mil}.
\begin{thm}[{\cite{Mil, Owe, OwePro}}]\label{thm:Owen}
Let $\mathcal L$ be a flag abstract simplicial complex on a finite set $V$ and $\mathcal O (\mathcal L)$ be the CAT(0) orthant space for $\mathcal L$.
Let $x, y \in \mathcal O (\mathcal L)$, and let $S_1$ and $S_2$ be the inclusion-wise minimal simplices such that $x \in \mathcal O_{S_1}$ and $y \in \mathcal O_{S_2}$.
Then one can find the explicit description of the geodesic joining $x$ and $y$ in $O((|S_1|+|S_2|)^4)$ time.
\end{thm}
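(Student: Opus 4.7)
The plan is to establish a combinatorial characterization of geodesics in the CAT(0) orthant space $\mathcal O(\mathcal L)$ and turn it into an algorithm. The structural claim at the heart of the argument is: the geodesic from $x \in \mathcal O_{S_1}$ to $y \in \mathcal O_{S_2}$ is uniquely determined by an ordered pair of partitions $(A_1, \ldots, A_k)$ of $S_1$ and $(B_1, \ldots, B_k)$ of $S_2$ such that each set
\begin{equation*}
T_i := B_1 \cup \cdots \cup B_{i-1} \cup A_i \cup \cdots \cup A_k
\end{equation*}
is a simplex of $\mathcal L$ for $i = 1, \ldots, k+1$. The geodesic traverses the orthants $\mathcal O_{T_1}, \ldots, \mathcal O_{T_{k+1}}$ in order, crossing each shared face $\mathcal O_{T_i \cap T_{i+1}}$ at a single breakpoint, and is a Euclidean line segment inside each orthant. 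The first step is to prove this characterization by ``unfolding'' the sequence of consecutive orthants into a single flat Euclidean region, in which the lifted geodesic must be a straight line; the flag condition on $\mathcal L$ guarantees that the unfolding is globally consistent (i.e., the union $T_i$ actually lies in $\mathcal L$).

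For a fixed feasible partition, the length is a closed-form expression in the norms $\|x_{A_i}\|$ and $\|y_{B_i}\|$, with the breakpoints determined by a direction-matching condition on consecutive segments. The next step is to write down a local optimality (``irreducibility'') test for each pair $(A_i, B_i)$: the pair is irreducible if and only if a weighted minimum cut in a bipartite graph on $A_i \cup B_i$, whose edges encode the compatibility relation inherited from $\mathcal L$, does not exceed a prescribed threshold. When some pair is reducible, the min-cut explicitly produces a new partition of strictly shorter length.

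The algorithm then starts from a coarse feasible partition (for instance, $k=1$ when $S_1 \cup S_2 \in \mathcal L$, or otherwise the initial split forced by the incompatibilities of $\mathcal L$) and repeatedly performs the irreducibility refinement on each pair. Each refinement strictly increases the block count $k$, which is bounded by $\min(|S_1|, |S_2|)$, and by Lemma~\ref{lem:Busemann} (convexity of the metric) strictly decreases the current path length, so the process cannot cycle. Hence at most $O(|S_1|+|S_2|)$ refinements occur, each requiring a bipartite min-cut on a graph of size $O(|S_1|+|S_2|)$, computable in $O((|S_1|+|S_2|)^3)$ time, for a total of $O((|S_1|+|S_2|)^4)$.

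The main obstacle is proving that \emph{local} irreducibility of every pair $(A_i, B_i)$ implies \emph{global} optimality of the partition. This is the deepest part of the Owen--Provan correctness argument: one must rule out the possibility of a shorter path obtained by exchanging elements between non-adjacent blocks. The proposed strategy is to combine the convexity of the metric (Lemma~\ref{lem:Busemann}) with an induction on the block count $k$, together with a variational argument on the unfolded Euclidean picture to reduce any candidate improvement to a single-pair refinement, which is ruled out by irreducibility. The flag hypothesis on $\mathcal L$ is used throughout to ensure that every intermediate partition yields an admissible orthant sequence in $\mathcal O(\mathcal L)$.
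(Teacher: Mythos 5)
The paper does not prove Theorem~\ref{thm:Owen}: it is cited verbatim from \cite{Mil, Owe, OwePro} and used as a black box to implement the oracle~(\ref{eq:oracle}). So there is no in-paper argument to compare your proposal against; what you have written has to stand or fall as a reconstruction of the Owen--Provan / Miller--Owen--Provan proof itself.

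Your sketch does track the right algorithm: the geodesic is encoded by an ordered support pair $(A_1,\dots,A_k)$, $(B_1,\dots,B_k)$, feasibility is the compatibility condition you encode via $T_i \in \mathcal L$ (which, given pairwise compatibility, follows from flagness), the local test is a bipartite min-cut/max-flow, each successful refinement increases $k$ by one, and $k \le \min(|S_1|,|S_2|)$ bounds the iteration count, giving $O((|S_1|+|S_2|)^4)$. That is indeed the shape of the cited result. But the proposal has a genuine gap at exactly the point you flag as ``the main obstacle'': the claim that local irreducibility of every block pair $(A_i,B_i)$ implies global optimality \emph{is} the theorem, and you do not prove it --- you only state a ``proposed strategy'' (convexity plus induction plus a variational argument on the unfolded picture). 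In Owen--Provan this step is a substantial structural theorem about the lattice of path spaces, not a routine induction, and without it the iteration is only shown to terminate at a locally unimprovable path, not at the geodesic. In the form given, this is an algorithm description with a correctness proof missing, not a proof of the statement.

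Two smaller issues. First, your characterization ignores the coordinates in $S_1 \cap S_2$ (common edges in the tree-space setting), which never need to vanish along the geodesic and are handled by a separate linear interpolation in the cited papers; your partition formalism as written forces every coordinate of $S_1$ to drop to $0$. Second, the appeal to Lemma~\ref{lem:Busemann} to conclude that each refinement \emph{strictly} decreases path length is misplaced: Busemann convexity is a property of the ambient metric, not of the min-cut refinement, and the strict decrease in Owen--Provan comes from the specific threshold condition in the extension problem. (You do not actually need the strict-decrease claim, since the monotone increase of $k$ already prevents cycling, so this is a stray rather than a fatal error.)
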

An interesting thing about their algorithm is that it solves as a subproblem a combinatorial optimization problem: a Maximum Weight Stable Set problem on a bipartite graph whose color
classes have at most $|S_1|, |S_2|$ vertices, respectively.
We should note that the above explicit descriptions of geodesics are radical expressions.
Computationally, for a point $p$ on a geodesic,
one can compute a rational point $p' \in \mathcal O (\mathcal L)$ such that  $d(p', p) \le \delta$ and the number of bits required for each coordinate of $p'$ is bounded by $O(\log(|V|/\delta))$.

For a CAT(0) orthant space $\mathcal O(\mathcal L)$ and a real number $r > 0$,
we call $\mathcal O(\mathcal L)|_{[0, r]} := \mathcal O(\mathcal L) \cap [0, r]^V$ a \textit{truncated CAT(0) orthant space}.
As a consequence of Theorem~\ref{thm:Owen}, one obtains the following:
\begin{thm}[{\cite{Mil}}]\label{thm:Miller}
Let $\mathcal L$ be a flag abstract simplicial complex on a finite set $V$ and $\mathcal O (\mathcal L)|_{[0,r]}$ be a truncated CAT(0) orthant space for $\mathcal L$.
Then for any two points $x, y \in \mathcal O (\mathcal L)|_{[0,r]}$, one can find the explicit description of the geodesic joining $x$ and $y$ in $O(|V|^4)$ time.
\end{thm}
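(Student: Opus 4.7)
The plan is to deduce Theorem~\ref{thm:Miller} from Theorem~\ref{thm:Owen} by establishing that the truncated orthant space $\mathcal O(\mathcal L)|_{[0,r]}$ is a convex subset of the ambient CAT(0) orthant space $\mathcal O(\mathcal L)$. Once this convexity is in hand, the unique geodesic joining any two points $x, y \in \mathcal O(\mathcal L)|_{[0,r]}$ in $\mathcal O(\mathcal L)$ automatically stays within the truncation and therefore also serves as the geodesic in $\mathcal O(\mathcal L)|_{[0,r]}$. Theorem~\ref{thm:Owen} then produces its explicit description, and the running time bound $O((|S_1|+|S_2|)^4) = O(|V|^4)$ follows since the containing simplices satisfy $|S_i| \le |V|$.

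The key technical step is to show that for each $v \in V$ the coordinate function $f_v(x) := x_v$ is convex on $\mathcal O(\mathcal L)$. I would achieve this by exhibiting $f_v$ as a distance to a convex subset. Consider $Z_v := \{x \in \mathcal O(\mathcal L) : x_v = 0\}$, which is the orthant space for the induced subcomplex $\mathcal L|_{V \setminus \{v\}}$ (still flag). Define the natural retraction $\pi_v : \mathcal O(\mathcal L) \to Z_v$ that zeroes out the $v$-coordinate; it is well-defined because $S \setminus \{v\} \in \mathcal L$ whenever $S \in \mathcal L$. Within each orthant $\mathcal O_S$, $\pi_v$ is a linear orthogonal projection and so is $1$-Lipschitz in the Euclidean metric; since the metric on $\mathcal O(\mathcal L)$ is the infimum of piecewise-Euclidean path lengths, $\pi_v$ is $1$-Lipschitz globally.

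From this I would first deduce the convexity of $Z_v$ in $\mathcal O(\mathcal L)$: for any $x, y \in Z_v$, the path $\pi_v \circ \gamma$ obtained from the unique geodesic $\gamma$ joining $x$ and $y$ is a path in $Z_v$ of length at most $d(x, y)$ from $x$ to $y$, so by uniqueness of CAT(0) geodesics $\pi_v \circ \gamma = \gamma$, and hence $\gamma \subset Z_v$. Next I would prove $d(x, Z_v) = x_v$. The upper bound is immediate because $\pi_v(x) \in Z_v$ with $d(x, \pi_v(x)) = x_v$. For the lower bound, for any path $\gamma$ from $x$ to any $y \in Z_v$, the scalar function $t \mapsto \gamma(t)_v$ is continuous with $\gamma(0)_v = x_v$ and $\gamma(1)_v = 0$; on each orthant piece containing $v$ one has $|\gamma'(t)_v| \le |\gamma'(t)|$, while on pieces not containing $v$ one has $\gamma(t)_v \equiv 0$, so the total variation of $\gamma(\cdot)_v$ is bounded by the length of $\gamma$, giving $d(x, y) \ge x_v$. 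Since $f_v = d(\cdot, Z_v)$ is a distance to a convex subset of a CAT(0) space, it is convex; each sublevel set $\{x_v \le r\}$ is therefore convex, and $\mathcal O(\mathcal L)|_{[0,r]} = \bigcap_{v \in V} \{x_v \le r\}$ is convex as a finite intersection.

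The main obstacle I anticipate is the careful handling of orthant transitions in the total-variation argument: a path $\gamma$ may cross from an orthant $\mathcal O_{S_j}$ with $v \in S_j$ into $\mathcal O_{S_{j+1}}$ with $v \notin S_{j+1}$ (where $\gamma_v$ must hit zero at the transition face) and possibly re-enter an orthant containing $v$ later. One must verify that $\gamma_v$ is continuous across such transitions, so that the elementary estimate $|\gamma(1)_v - \gamma(0)_v| \le \int_0^1 |\gamma'(t)_v|\, dt$ remains valid after summing over pieces. Once this bookkeeping is settled, the reduction to Theorem~\ref{thm:Owen} is immediate.
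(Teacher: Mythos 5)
Your proposal is correct and follows the same route the paper takes: reduce to Theorem~\ref{thm:Owen} by establishing that $\mathcal O(\mathcal L)|_{[0,r]}$ is a convex subspace of $\mathcal O(\mathcal L)$, a fact the paper asserts immediately after the theorem statement and attributes to~\cite{Mil} without proof. Your self-contained justification of that convexity---realizing each coordinate function $x \mapsto x_v$ as $d(\cdot, Z_v)$ for the convex set $Z_v = \{x_v = 0\}$ via the $1$-Lipschitz retraction $\pi_v$, hence convex, so that the truncation is an intersection of convex sublevel sets---is sound and supplies the detail the paper leaves to the citation.
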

In fact, a truncated CAT(0) orthant space $\mathcal O(\mathcal L)|_{[0,r]}$ is a convex subspace of $\mathcal O(\mathcal L)$.

\subsection{Main theorem}\label{subsec:main}
We now consider the following problem.
It should be remarked that as stated in~\cite{Ard} there are no simple formulas for the breakpoints in geodesics in CAT(0) cubical complexes due to their algebraic complexity,
and hence one can only compute them approximately.
Computationally, we adopt the standard embedding as an input CAT(0) cubical complex.
\begin{prob}\label{prob:1}
Given a poset with inconsistent pairs $P$, two points $p, q$ in the standard embedding $\mathcal K_P$ of $P$, and a positive parameter $\epsilon > 0$,
find a sequence of points $p = x_0, x_1, \dots, x_{n-1}, x_n = q$ in $\mathcal K_P$ with $\sum_{i =0}^{n-1} d(x_{i}, x_{i+1}) \le d(p, q) + \epsilon$ and compute the
geodesic joining $x_{i}$ and $x_{i+1}$ for $i =0, 1, \dots, n-1$.
\end{prob}
Our main result is the following theorem.
Note that for the shortest path problem in a general CAT(0) cubical complex there has been no algorithm that runs in time polynomial in the size of the complex, much less the size of the compact representation PIP.

\begin{thm}\label{thm:main2}
Problem~\ref{prob:1} can be solved in $O(|P|^7\log(|P|/\epsilon))$ time.
Moreover, the number of bits required for each coordinate of points in $\mathcal K_P$ occurring throughout the algorithm can be bounded by $O(\log(|P|/\epsilon))$.
\end{thm}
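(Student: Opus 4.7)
The plan is to apply Algorithm~1 (Theorem~\ref{thm:main1}) to $X = \mathcal K_P$, using Miller--Owen--Provan (Theorem~\ref{thm:Miller}) as the close-point oracle~(\ref{eq:oracle}). If I can supply (i) an initial chain $x$ with $n = O(|P|)$ breakpoints and $\gap(x) \le D/2 - \epsilon$ for some fixed constant $D$, and (ii) an oracle that computes, in $O(|P|^4)$ time and with $O(\log(|P|/\epsilon))$-bit rational coordinates, the geodesic between any two points of $\mathcal K_P$ at distance at most $D$, then Theorem~\ref{thm:main1} immediately produces a chain of length at most $d(p,q) + \epsilon$ after $O(n^3 \log(nD/\epsilon)) = O(|P|^3 \log(|P|/\epsilon))$ oracle calls, and one extra pass recovers the geodesic segments asked for in Problem~\ref{prob:1}. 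This totals $O(|P|^7 \log(|P|/\epsilon))$ as claimed.

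\emph{Initial chain.} From the PIP $P$ I would compute, in polynomial time, a shortest walk $v_p = u_0, u_1, \dots, u_m = v_q$ in $G(\mathcal K_P)$ between vertices of cells of $\mathcal K_P$ containing $p$ and $q$. Since adjacent vertices of $G(\mathcal K_P)$ correspond to consistent order ideals differing in exactly one element, one has $m \le |P|$. Prepending $p$, appending $q$, and interposing fixed cube-midpoints between consecutive $u_i$'s as needed, yields a chain $x$ with $n = O(|P|)$, $\ell (x) = O(|P|)$, and $\gap(x)$ below any prescribed constant.

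\emph{The oracle (main obstacle).} Given $a, b \in \mathcal K_P$ with $d(a, b) \le D$, I would reduce the computation of the geodesic $[a,b]$ to a problem in a truncated CAT(0) orthant space of dimension $O(|P|)$. At any vertex $v$ of $\mathcal K_P$ the closed star of $v$ is naturally isometric to the truncated CAT(0) orthant space $\mathcal O(\mathrm{Lk}(v))|_{[0,1]}$, since by Theorem~\ref{thm:Gromov} the link $\mathrm{Lk}(v)$ is a flag simplicial complex on the set of edges of $\mathcal K_P$ at $v$, and the latter has cardinality at most $|P|$ (each edge toggles a single PIP element). If the geodesic $[a,b]$ stays inside a single star, Theorem~\ref{thm:Miller} then computes it in $O(|P|^4)$ time. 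In general, I would enlarge the region to a subset $\mathcal K_P(S)$ by greedily adding adjacent stars along the direction in which a tentative geodesic is found to exit the current one, and then invoke Theorem~\ref{thm:CheMaf} to conclude that $\mathcal K_P(S)$ is convex in $\mathcal K_P$ and hence contains the true geodesic from $a$ to $b$. The delicate point, and the principal technical obstacle, is to use $d(a,b) \le D$ and the PIP structure to bound the number of enlargements, and hence $|S|$, by $O(1)$, and to check that the resulting subcomplex still embeds isometrically into a single truncated CAT(0) orthant space of dimension $O(|P|)$ so that Theorem~\ref{thm:Miller} applies directly.

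\emph{Cost and bit complexity.} With the oracle above, the overall runtime of Algorithm~1 is $O(|P|^3 \log(|P|/\epsilon)) \cdot O(|P|^4) = O(|P|^7 \log(|P|/\epsilon))$. For the bit bound, Theorem~\ref{thm:Miller} outputs $O(\log(|P|/\epsilon))$-bit rational coordinates, and Algorithm~1 only combines these via midpoint-like operations (each adding at most one bit per iteration over $O(|P|^3 \log(|P|/\epsilon))$ iterations), so every coordinate encountered in the algorithm admits the required $O(\log(|P|/\epsilon))$-bit representation.
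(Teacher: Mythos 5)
Your overall architecture is the right one and matches the paper: apply Algorithm~1 (Theorem~\ref{thm:main1}) with Miller--Owen--Provan (Theorem~\ref{thm:Miller}) as the oracle, seed it with a chain built from an edge geodesic in $G(\mathcal K_P)$, and combine to get $O(|P|^7\log(|P|/\epsilon))$. But the heart of the theorem --- the oracle --- is precisely where you leave a gap, and you flag it yourself. Your fallback of ``greedily adding adjacent stars'' and then hoping to bound the number of enlargements by $O(1)$ is both unnecessary and unlikely to go through: a union of two or more closed stars is in general \emph{not} isometric to a truncated CAT(0) orthant space (the combined link is not the link of a single vertex), so Theorem~\ref{thm:Miller} would not apply to the enlarged region, and there is no obvious bound on the number of enlargements. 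The paper's actual resolution is cleaner and avoids enlarging entirely: it proves (Lemma~\ref{lem:disjoint}, via Proposition~\ref{prop:l1=l2} comparing the graph-theoretic gate with the CAT(0) orthogonal projection) that if $d(a,b) < 1$ then the \emph{minimal} cells $R_1, R_2$ containing $a$ and $b$ already intersect. Picking any vertex $v \in R_1 \cap R_2$, both $a$ and $b$ lie in $\mathrm{St}(v, \mathcal K_P)$; Lemma~\ref{lem:star} (via convexity of balls in $G^{\Delta}$ and Theorem~\ref{thm:CheMaf}) shows this star is convex in $\mathcal K_P$, so the geodesic never leaves it. Since the star is a single truncated orthant space of dimension at most $|P|$, Theorem~\ref{thm:Miller} applies directly in $O(|P|^4)$ time. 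That lemma is the missing ingredient you would need to supply.

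A secondary issue: your bit-complexity argument is internally inconsistent. ``Each iteration adding at most one bit over $O(|P|^3\log(|P|/\epsilon))$ iterations'' would accumulate to $O(|P|^3\log(|P|/\epsilon))$ bits, not the claimed $O(\log(|P|/\epsilon))$. The correct reason bit-size stays bounded is that Algorithm~1 only requires $\delta$-midpoints, not exact midpoints, so each oracle output can be \emph{rounded} to a rational point within $\delta$ of the true midpoint using $O(\log(m/\delta))$ bits; with $\delta = \epsilon/(16n^3)$ and $n = O(|P|)$ this gives $O(\log(|P|/\epsilon))$ bits per coordinate, uniformly over all iterations, with no accumulation.
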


Let us show this theorem. Let $m$ denote the number of elements of $P$ and let $D < 1$ be a positive constant close to $1$ (e.g., set $D := 0.9$).
Theorem~\ref{thm:main1} implies that in order to prove Theorem~\ref{thm:main2} it suffices to show that:
\begin{enumerate}
\renewcommand{\labelenumi}{\theenumi}
\renewcommand{\theenumi}{(\alph{enumi})}
\setlength{\parskip}{0cm}
\setlength{\itemsep}{0cm}
	\item\label{item:a} Given two points $p, q \in \mathcal K_P$, one can find a sequence of points $p = x_0, x_1, \dots, x_{n-1}, x_n = q$ in $\mathcal K_P$ such that $n = O(m)$ and $d(x_i, x_{i+1}) \le D/4-\epsilon$ for $i = 0, 1, \dots, n-1$.
	\item\label{item:b} Given two points $p, q \in \mathcal K_P$ with $d(p, q) \le D$, one can compute the geodesic joining $p$ and $q$ in $O(m^4)$ time and find a $\delta$-midpoint $w$ of $p$ and $q$ with $O(\log(m/\delta))$ bits enough for each coordinate of $w$.
\end{enumerate}

It is relatively easy to show \ref{item:a}, by considering a curve $c(p, q)$ issuing at $p$, going through an edge geodesic (a shortest path in the underlying graph of $\mathcal K_P$) between some vertices of cells containing $p, q$, and ending at $q$.
(Note that one can easily find an edge geodesic between vertices $u$ and $v$ of $\mathcal K_P$ under the PIP representation.
Reroot the complex $\mathcal K_P$ at $u$. In other words, construct a poset $P'$ for which $\mathcal K_{P'} \cong \mathcal K_P$ and $u$ is the root of $\mathcal K_{P'}$; this construction is implicitly stated in~\cite{Ard}.
Then the edge geodesic in $\mathcal K_{P'}$ from the root $u = \emptyset$ to $v = I$, where $I$ is a consistent order ideal of $P'$, can be found by considering a linear extension of the elements of $I$.)
Since such a curve $c(p, q)$ has length at most $O(m)$, dividing it into parts appropriately, one can get a desired sequence of points.
To show \ref{item:b}, we need the following two lemmas.
\begin{lem}\label{lem:star}
Let $\mathcal K$ be a CAT(0) cubical complex and $v$ be a vertex of $\mathcal K$.
Then the star $\mathrm{St}(v, \mathcal K)$ of $v$ in $\mathcal K$, i.e., the subcomplex spanned by all cells containing $v$, is convex in $\mathcal K$.
\end{lem}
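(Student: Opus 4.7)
My plan is to use Theorem~\ref{thm:CheMaf}, reducing the statement to showing that $S := V(\mathrm{St}(v, \mathcal K))$ is a convex vertex set of the underlying median graph $G(\mathcal K)$. Writing $\mathcal K = \mathcal K_P$ for a PIP $P$ and identifying each vertex with its consistent order ideal, the standard embedding gives the combinatorial criterion: $u \in S$ if and only if $I_u \cup I_v$ is a consistent order ideal and every element of $I_u \triangle I_v$ is maximal in $I_u \cup I_v$ (the minimal witnessing cell being $C^{I_u \cup I_v}_{I_u \triangle I_v}$).

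I would then pick $u_1, u_2 \in S$ and an arbitrary vertex $w$ on a graph geodesic between them. By the standard interval description for PIP-based median graphs, $I_w$ is a consistent order ideal with $I_1 \cap I_2 \subseteq I_w \subseteq I_1 \cup I_2$, and I must check the two criteria above for $w$. Consistency of $I_v \cup I_w$ is easy: an inconsistent pair inside would have endpoints $p \in I_v$ and $q \in I_w \setminus I_v \subseteq I_1 \cup I_2$, placing $\{p, q\}$ in some $I_v \cup I_i$ and contradicting $u_i \in S$.

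The heart of the proof is verifying that every $x \in I_v \triangle I_w$ is maximal in $I_v \cup I_w$. Suppose some $y \in I_v \cup I_w$ satisfies $x \prec y$. A short case analysis based on which of $I_v, I_w$ contain $x$ and $y$ reduces everything to one delicate configuration: $x, y \in I_w \setminus I_v$. The other configurations collapse either by order-ideal closure of $I_v$ or $I_w$ (forcing $x$ into an ideal it is not supposed to be in), or by the observation that elements of $I_v \setminus I_i$ are maximal in $I_v \cup I_i$, hence in $I_v$, by hypothesis on $u_i$. For the delicate case, the main obstacle, I would pick $j \in \{1, 2\}$ with $x \in I_j$, so that $x \in I_v \triangle I_j$ is maximal in $I_v \cup I_j$; if $y \in I_j$ this directly contradicts $y \succ x$, while if $y \in I_{3-j} \setminus I_j$, then order-ideal closure of $I_{3-j}$ forces $x \in I_{3-j}$, and the analogous maximality of $x$ in $I_v \cup I_{3-j}$ again contradicts $y \succ x$. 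Once all sub-cases are closed, $w \in S$, and Theorem~\ref{thm:CheMaf} delivers the convexity of $\mathrm{St}(v, \mathcal K)$ in $\mathcal K$.
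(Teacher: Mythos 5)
Your proposal is correct, but it takes a genuinely different route from the paper. Both arguments begin the same way, reducing via Theorem~\ref{thm:CheMaf} to showing that $S := V(\mathrm{St}(v,\mathcal K))$ is a convex vertex set in the underlying median graph. At that point the paper finishes in one stroke: it observes that $S$ is the ball $B(v,1)$ in the graph $G^{\Delta}$ obtained by joining vertices that share a cube, and then cites the known fact (Bandelt--Van de Vel) that every ball of $G^{\Delta}$ is convex in the median graph $G$. You instead verify graph convexity by hand in the PIP coordinates: you characterize membership in $S$ by the condition that $I_u \cup I_v$ is a consistent order ideal with $I_u \triangle I_v$ contained in its maximal elements, use the interval description $I_1 \cap I_2 \subseteq I_w \subseteq I_1 \cup I_2$, and close a case analysis in which each bad configuration is killed either by order-ideal closure or by maximality of elements of $I_v \triangle I_j$ in $I_v \cup I_j$ for the appropriate $j\in\{1,2\}$. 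I checked the cases and they all close as you indicate; in particular the sub-case $x\in I_v\setminus I_w$, $y\in I_v$ requires the maximality observation (not just closure), and your phrasing does cover that. The trade-off is clear: the paper's argument is shorter and leverages standard median-graph theory as a black box, while yours is self-contained, stays entirely inside the PIP formalism the rest of the section uses, and makes the mechanism behind the convexity explicit.
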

\begin{proof}
Let $G^{\Delta}$ be the graph having the same vertex set as $G = G(\mathcal K)$, where two vertices are adjacent if and only if they belong to a common cube of $G$.
It is well-known that every ball $B(u, r) := \{ u' \in V(G^{\Delta}) \, | \, d_{G^{\Delta}}(u, u') \le r \}$ of $G^{\Delta}$ is a convex set in a median graph $G$; see, e.g.,~\cite[Proposition 2.6]{Ban}.
In particular, the ball $B(v, 1)$ of $G^{\Delta}$, which coincides with the vertex set of $\mathrm{St}(v, \mathcal K)$, is convex in $G$.
Hence, by Theorem~\ref{thm:CheMaf}, $\mathrm{St}(v, \mathcal K)$ is convex in $\mathcal K$ in the $\ell_2$-metric.
\end{proof}

\begin{lem}\label{lem:disjoint}
Let $\mathcal K$ be a CAT(0) cubical complex.
Let $p, q$ be two points in $\mathcal K$ with $d(p, q) < 1$ and $R_1, R_2$ be the minimal cells of $\mathcal K$ containing $p, q$, respectively.
Then $R_1 \cap R_2 \neq \emptyset$.
\end{lem}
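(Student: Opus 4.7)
The plan is to prove the contrapositive: assume $R_1 \cap R_2 = \emptyset$ and deduce $d(p,q) \geq 1$. The key is to produce a ``hyperplane'' of $\mathcal K$ strictly separating $R_1$ from $R_2$ and then use the product structure around it.

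First I would produce a separating hyperplane. Since $R_1 \cap R_2 = \emptyset$, the vertex sets $V(R_1), V(R_2)$ are disjoint; each, being a cube-subgraph of the median graph $G(\mathcal K)$, is convex (gated) there. By the classical separation property of convex sets in a median graph, some halfspace $\mathcal H$ of $G(\mathcal K)$ satisfies $V(R_1) \subseteq \mathcal H$ and $V(R_2) \cap \mathcal H = \emptyset$. Theorem~\ref{thm:CheMaf} then gives that both $\mathcal K(\mathcal H)$ and $\mathcal K(V(\mathcal K)\setminus\mathcal H)$ are convex subcomplexes of $\mathcal K$. Let $H \subseteq \mathcal K$ denote the associated ``hyperplane'', i.e., the union over all cubes $C$ straddling $\mathcal H$ of the midcube of $C$ in the direction separating $\mathcal H$ from its complement. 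By construction $R_1 \cap H = R_2 \cap H = \emptyset$, and $p, q$ lie in the two distinct open components $H^+, H^-$ of $\mathcal K \setminus H$.

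Next I would show $d(p,H), d(q,H) \geq 1/2$. Every cube $C$ crossed by $H$ splits isometrically as $C \cong F \times [0,1]$ with $C \cap H = F \times \{1/2\}$, and any face of $C$ not crossed by $H$ has the form $F' \times \{0\}$ or $F' \times \{1\}$, lying at $\ell_2$-distance exactly $1/2$ from the midcube. Because $R_1 \cap H = \emptyset$, $R_1$ is not a crossed cube; if $R_1$ is a face of some crossed $C$, the above yields $d(R_1, H \cap C) = 1/2$, and other midcubes can only be farther. If $R_1$ is a face of no crossed cube, then reaching $H$ from any point of $R_1$ requires first traversing a face of an adjacent cube and then at least an additional $1/2$ inside that cube, again giving $d(R_1, H) \geq 1/2$. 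Hence $d(p,H) \geq 1/2$, and symmetrically $d(q,H) \geq 1/2$.

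Finally, by continuity the geodesic $\gamma$ from $p$ to $q$ crosses $H$ at some point $h$; since $\gamma$ is a geodesic, $d(p,q) = d(p,h) + d(h,q) \geq d(p,H) + d(q,H) \geq 1$, contradicting the hypothesis $d(p,q) < 1$. The main obstacle is the first step: the paper defines halfspaces as vertex subsets of $G(\mathcal K)$ but does not introduce hyperplanes as subcomplexes of $\mathcal K$, nor does it explicitly state the separation property of convex sets in median graphs. The right formal framework (hyperplanes, their carriers, the isometric $H \times [0,1]$ decomposition of the carrier) is standard in the theory of CAT(0) cube complexes but must be invoked with some care.
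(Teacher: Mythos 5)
Your argument is correct but takes a genuinely different route from the paper's. The paper proves the lemma via the gate--projection correspondence of Proposition~\ref{prop:l1=l2}: it picks $u \in V(R_1)$ and $v \in V(R_2)$ at minimum graph distance, notes they are mutual gates and hence mutual orthogonal projections onto $R_1$ and $R_2$, concludes that $d(u,v)=d(R_1,R_2)\le d(p,q)<1$, and then $u=v$ since distinct vertices lie at distance at least one. You instead pass to the contrapositive, separate $V(R_1)$ from $V(R_2)$ by a halfspace of the median graph, and use the geometric hyperplane $H$ together with the isometric product structure of its carrier to show $d(p,H)\ge 1/2$ and $d(q,H)\ge 1/2$. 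This is a sound and standard line of reasoning, and you correctly flag that it relies on machinery---hyperplanes realized as subcomplexes, their carriers, and the carrier's $H\times[0,1]$ decomposition---that the paper does not develop. The phrase ``other midcubes can only be farther'' is exactly where that machinery earns its keep: a clean way to make it rigorous is a $1$-Lipschitz signed height function that is constant at the extreme values off the carrier and equal to the transversal coordinate inside it, from which $d(p,H)\ge 1/2$ follows in one line. It is worth observing that the paper's proof of Proposition~\ref{prop:l1=l2} itself uses the isometry of $\mathcal K(\partial H\cup\partial H')$ with a product $\mathcal K(\partial H)\times[0,1]$, which is the carrier decomposition in thin disguise; so the two proofs overlap in substance even though they are organized quite differently. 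Your version buys a more visual separating-wall picture, while the paper's is more self-contained given the projection lemma it has already established.
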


We give a proof of Lemma~\ref{lem:disjoint} in Section~\ref{subsec:proof}.
Using these lemmas, we show \ref{item:b}.
Suppose that we are given two points $p, q \in \mathcal K_P$ with $d(p, q) \le D$.
First notice that one can find in linear time the minimal cells $R_1$ and $R_2$ of $\mathcal K_P$ that contain $p$ and $q$, respectively, just by checking their coordinates.
(Indeed, one has $R_1 = C_M^I$ for $I = \set{i \in P}{p_i > 0}$ and $M = \set{i \in P}{0 < p_i < 1}$.)
Since $d(p, q) \le D < 1$, from Lemma~\ref{lem:disjoint} we know that $R_1 \cap R_2 \neq \emptyset$.
Let $v$ be a vertex of $R_1 \cap R_2$.
Then $p$ and $q$ are contained in the star $\mathrm{St}(v, \mathcal K_P)$ of $v$.
Since $\mathrm{St}(v, \mathcal K_P)$ is convex in $\mathcal K_P$ by Lemma~\ref{lem:star}, we only have to compute the geodesic in $\mathrm{St}(v, \mathcal K_P)$.
Obviously, $\mathrm{St}(v, \mathcal K_P)$ is a truncated CAT(0) orthant space, and hence
one can compute the geodesic between $p$ and $q$ in $\mathrm{St}(v, \mathcal K_P)$ in $O(m^4)$ time, by Theorem~\ref{thm:Miller}.
In addition, one can find a $\delta$-midpoint $w \in \mathrm{St}(v, \mathcal K_P)$ of $p$ and $q$ such that the number of bits required for each coordinate of $w$ is bounded by $O(\log(m/\delta))$.
This implies \ref{item:b} and therefore completes the proof of Theorem~\ref{thm:main2}.


\subsection{Proof of Lemma~\ref{lem:disjoint}}\label{subsec:proof}
We end this paper by giving a proof of Lemma~\ref{lem:disjoint}, which were used in proving Theorem~\ref{thm:main2}.
We start with some properties of halfspaces in a median graph $G = (V, E)$.
For any edge $ab$ of $G$, define $H(a, b) := \set{v \in V}{d_G(a, v) <d_G(b, v)}$ and $H(b, a) := \set{v \in V}{d_G(b, v) <d_G(a, v)}$.
Then $H(a, b)$ and $H(b, a)$ are complementary halfspaces of $G$~\cite{Mul}.
The \textit{boundary} of a halfspace $H$ of $G$, denoted by $\partial H$, consists of all vertices of $H$ which have a neighbor in the complement $H' := V\bsl H$ of $H$.
(Note that such a neighbor is unique for each vertex in $\partial H$, since median graphs are bipartite.)
Thus, one can define a bijection $\phi_H : \partial H \to \partial H'$ such that $v = \phi_H(u)$ if and only if there exists an edge $uv$ with $u \in \partial H$ and $v \in \partial H'$.
It was shown by Mulder~\cite{Mul} that the boundaries $\partial H$ and $\partial H'$ of complementary halfspaces $H, H'$
induce convex subgraphs of $G$ with an isomorphism $\phi_H$. Hence $\partial H \cup \partial H'$ and $\partial H \cup  H'$ are also convex sets of $G$.

We recall some basic properties of CAT(0) spaces.
Let $X$ be a CAT(0) space, and let $Y$ be a complete closed convex subset of $X$.
Then for every $x \in X$, there exists a unique point $\pi(x) \in Y$ such that $d(x, \pi(x)) = d(x, Y) := \inf_{y \in Y}d(x, y)$.
The resulting map $\pi : X \to Y$ is called the \textit{orthogonal projection} onto $Y$; see~\cite{Bri} for details.

\begin{prop}\label{prop:l1=l2}
Let $\mathcal K$ be a CAT(0) cubical complex.
Let $C$ and $v$ be a cell and a vertex of $\mathcal K$, respectively.
Then the gate of $v$ in $V(C)$ in the graph $G(\mathcal K)$ coincides with the image $\pi(v)$ of $v$ under the orthogonal projection $\pi : \mathcal K \to C$ onto $C$.
\end{prop}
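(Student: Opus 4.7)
Let $g$ denote the combinatorial gate of $v$ in $V(C)$ in $G(\mathcal K)$; this is well-defined because $V(C)$, as the vertex set of a cube in the median graph $G(\mathcal K)$, is convex and hence gated. My plan is to verify that $g$ is the orthogonal projection by checking the CAT(0) projection optimality condition: for every $y \in C$, the angle at $g$ between the geodesic to $v$ and the geodesic to $y$ is at least $\pi/2$. I first localize to $\mathrm{St}(g, \mathcal K)$, which is convex in $\mathcal K$ by Lemma~\ref{lem:star} and contains $C$. This star is isometric to the truncated CAT(0) orthant $\mathcal O(\mathcal L)|_{[0,1]}$, where $\mathcal L$ is the (flag) link of $g$, with $g$ sent to the origin and $C$ sent to a coordinate sub-cube indexed by the simplex $I \in \mathcal L$ whose vertices are the edges of $C$ incident to $g$. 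In these local coordinates, the directions into $C$ from $g$ form the nonnegative cone spanned by the axis vectors $\{e_i\}_{i \in I}$.

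Writing $u$ for the initial direction at $g$ of the geodesic from $g$ to $v$, expressed in the orthant coordinates on $\mathrm{St}(g, \mathcal K)$ (so $u$ lies in the ambient nonnegative orthant), the optimality condition reduces to showing $u_i = 0$ for every $i \in I$. Fix such an $i$ and consider the hyperplane $H_i$ of $\mathcal K$ dual to the edge $g(g+e_i)$. The combinatorial gate property $g \in I_G(v, g+e_i)$, combined with the standard halfspace description of median graphs, forces $v$ and $g$ to lie in a common closed halfspace $\bar H_i^-$ of $\mathcal K$, while $g + e_i$ lies in the opposite halfspace. This halfspace is convex in $G(\mathcal K)$, so by Theorem~\ref{thm:CheMaf} the induced subcomplex in $\mathcal K$ is convex; consequently the entire geodesic from $g$ to $v$ remains in $\bar H_i^-$. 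In each cube of $\mathcal K$ that contains $g$ and has $i$ as an axis, $H_i$ is the midplane $\{x_i = 1/2\}$, and I will use the fact that a CAT(0) geodesic cannot recross a hyperplane to push this global containment back to the initial direction and conclude $u_i = 0$.

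The main obstacle is precisely this last deduction. Since $\bar H_i^-$ contains an open neighborhood of $g$ in $\mathcal K$, the geodesic staying in the halfspace is already consistent with a small positive $u_i$ on purely local grounds; the extra input must be the global no-recrossing of $H_i$, translated into a statement about the initial tangent at $g$. I expect this step to require a careful case analysis over the cubes incident to $g$ that contain the edge $g(g + e_i)$, using the orthant structure on $\mathrm{St}(g, \mathcal K)$ and the behavior of the coordinate $x_i$ along the geodesic inside each such cube. Once $u_i = 0$ is established for every $i \in I$, $u$ makes angle at least $\pi/2$ with every direction in the cone spanned by $\{e_i\}_{i \in I}$, which is precisely the set of directions into $C$ from $g$, so the CAT(0) optimality condition is met and $\pi(v) = g$.
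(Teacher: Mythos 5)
Your approach is genuinely different from the paper's and, once corrected, it goes through; but the ``main obstacle'' you flag at the end is not actually there, because you have misread what $\mathcal K(H_i)$ looks like near $g$. The object whose convexity Theorem~\ref{thm:CheMaf} hands you is the \emph{induced subcomplex} $\mathcal K(H_i)$ of the vertex-halfspace $H_i := H(g,\, g+e_i)$: a cell belongs to $\mathcal K(H_i)$ only if \emph{all} of its vertices lie in $H_i$. In particular no cube containing the edge $g(g+e_i)$ is in $\mathcal K(H_i)$, and in the orthant coordinates on $\mathrm{St}(g,\mathcal K)$ one has $\mathcal K(H_i)\cap\mathrm{St}(g,\mathcal K)=\{x\in\mathrm{St}(g,\mathcal K):x_i=0\}$ --- a coordinate wall, not the thickened region $\{x_i\le 1/2\}$. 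It therefore contains no open neighborhood of $g$. The gate property places both $g$ and $v$ in $H_i$, so convexity confines the entire geodesic from $g$ to $v$ to $\mathcal K(H_i)$, hence to $\{x_i=0\}$ in a neighborhood of $g$, and $u_i=0$ follows at once; no wall-non-recrossing argument or case analysis over incident cubes is needed. Repeating for every $i\in I$ and invoking the first-variation characterization of the CAT(0) nearest-point projection completes the proof exactly as you planned.

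For the record, the paper's proof never invokes the angle criterion. It shows directly that $\pi(v)\in C\cap\mathcal K(H_i)$ for each $i$: taking the geodesic from $v$ to $\pi(v)$, it picks the parameter at which the geodesic first meets $\mathcal K(\partial H_i)$ and uses Mulder's isometry $\mathcal K(\partial H_i\cup\partial H_i')\cong\mathcal K(\partial H_i)\times[0,1]$ to ``fold'' the remainder of the path onto $\mathcal K(\partial H_i)$, producing a path of no greater length to a point of $C\cap\mathcal K(H_i)$; minimality of $\pi(v)$ then forces $\pi(v)\in\mathcal K(H_i)$, and intersecting over $i$ gives $\pi(v)=g$. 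Both arguments rest on the same halfspace-convexity input from Theorem~\ref{thm:CheMaf}; yours trades the folding/shortening step for the standard projection angle condition, and becomes quite short once one notices that the geodesic is already trapped in the wall.
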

\begin{proof}
Let $v'$ be the gate of $v$ in $V(C)$ in the graph $G = G(\mathcal K)$.
Let $v_1, v_2, \dots, v_k$ be the neighbors of $v'$ in $V(C)$, where $k$ is the dimension of $C$.
Let us write $H_i := H(v', v_i) = \{ u \in V(G) \, | \, d_G(v', u) < d_G(v_i, u) \}$ for each $i = 1, 2, \dots, k$.
Note that each halfspace $H_i$ contains $v$, as $v' \in I_G(v_i, v)$.

We show $\pi(v) \in C \cap \mathcal K(H_i)$ for each $i = 1, 2, \dots, k$.
To see this, fix an arbitrary $i$ and set $H := H_i$ and $H' := V(G) \bsl H_i$.
Let $\gamma$ be the geodesic in $\mathcal K$ with $\gamma(0) = v$ and $\gamma(1) = \pi(v)$.
Then one can find a point $p := \gamma(s)$ on the geodesic for some $s \in [0, 1]$ that belongs to $\mathcal K(\partial H)$.
As remarked above, $\partial H \cup \partial H'$ is convex in $G$, and hence $\mathcal K(\partial H \cup \partial H')$ is convex in $\mathcal K$ by Theorem~\ref{thm:CheMaf}.
This implies that the geodesic segment joining $p$ and $\pi(v)$ is contained in $\mathcal K(\partial H \cup \partial H')$.
Note that $\mathcal K(\partial H \cup \partial H')$ is isometric to $\mathcal K(\partial H) \times [0, 1]$.
Let $\psi : \mathcal K(\partial H \cup \partial H') \to \mathcal K(\partial H) \times [0, 1]$ be the isometry that sends $\mathcal K(\partial H)$ to $\mathcal K(\partial H) \times \{0 \}$
and $\mathcal K(\partial H')$ to $\mathcal K(\partial H) \times \{1\}$.
For each point $y$ in $K(\partial H \cup \partial H')$, when writing its image $\psi(y)$ as $(y_1, y_2) \in \mathcal K(\partial H) \times [0, 1]$,
we shall write $y_H$ to denote the point $\psi^{-1}((y_1, 0))$ in $\mathcal K(\partial H)$.
Let $\gamma' : [0, 1] \to \mathcal K$ be the map obtained from $\gamma$ by reseting $\gamma'(t) := (\gamma(t))_H$ for all $t \in [s, 1]$; see Figure~\ref{fig:proj} for intuition.
\begin{figure}[t]
	\centering
	\includegraphics[bb= 0 0 366 167]{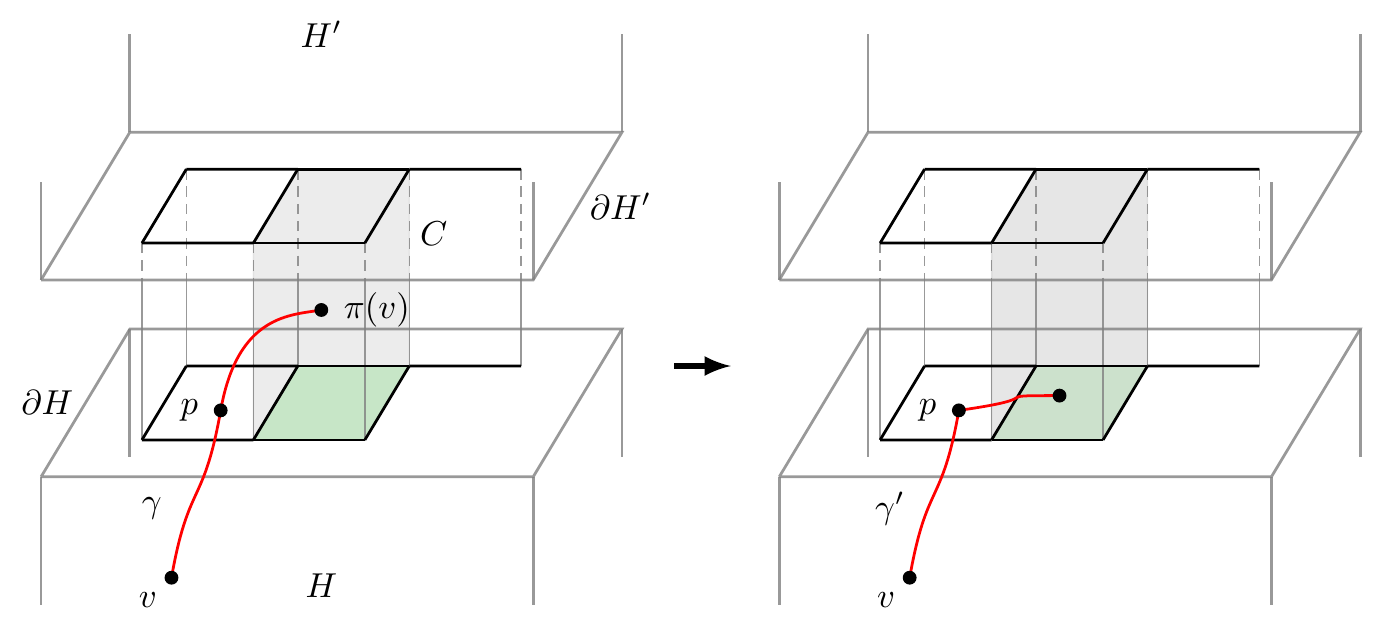}
	\caption{To the proof of Proposition \ref{prop:l1=l2}. Taking some $s \in [0, 1]$ such that $p := \gamma(s)$ lies in $\mathcal K(\partial H)$, and ``projecting'' the point $\gamma(t)$ onto
	$\mathcal K(\partial H)$ for all $t \in [s, 1]$, one can get a path $\gamma'$ joining $v$ and some point on $C \cap \mathcal K(H)$ whose length is at most that of $\gamma$.}
	\label{fig:proj}
	\end{figure}
Then $\gamma'$ is a continuous map in $\mathcal K$ joining $\gamma'(0) = v$ and $\gamma'(1) \in C \cap \mathcal K(H)$ whose length is at most the length of $\gamma$.
This implies that $\pi(v)$ should belong to $C \cap \mathcal K(H)$. Thus, we obtain $\pi(v) \in C \cap \mathcal K(H_i)$ for each $i = 1, 2, \dots, k$.

Note that the intersection of all $C \cap \mathcal K(H_i)$ is a singleton $\{v' \}$.
This implies that $\pi(v) \in \{v' \}$ and completes the proof.
\end{proof}

Now let us show Lemma~\ref{lem:disjoint}.
Let $p, q$ be two points in $\mathcal K$ with $d(p, q) < 1$ and $R_1, R_2$ be the minimal cells of $\mathcal K$ containing $p, q$, respectively.
Let $u \in V(R_1)$ and $v \in V(R_2)$ be vertices satisfying $d_G(u, v) = \min \{ d_G(u', v') \, | \, u' \in V(R_1), v' \in V(R_2) \}$ in the underlying graph $G = G(\mathcal K)$.
It is easy to see that $u$ is the gate of $v$ in $V(R_1)$ and $v$ is the gate of $u$ in $V(R_2)$, in the graph $G$.
Hence by Proposition~\ref{prop:l1=l2} we have $\pi_1(v) = u$ and $\pi_2(u) = v$, where $\pi_i : \mathcal K \to R_i$ is the orthogonal projection onto $R_i$ for $i = 1, 2$.
This implies that $d(u, v) = d(R_1, R_2) := \inf \{d(x, y) \, | \, x \in R_1, y \in R_2 \}$.
Since $d(R_1, R_2) \le d(p, q) < 1$, we have $d(u, v) < 1$.
Hence $u$ and $v$ should be the same vertex of $\mathcal K$, and thus, we have $R_1 \cap R_2 \neq \emptyset$.
This completes the proof of Lemma~\ref{lem:disjoint}.


\section*{Acknowledgments}
I thank Hiroshi Hirai for introducing me to this problem and for helpful comments and careful reading.
The work was supported by JSPS KAKENHI Grant Number 17K00029, and by JST ERATO Grant Number JPMJER1201, Japan.

\end{document}